\documentclass[11pt,a4paper]{article}
\pdfoutput=1
\usepackage[utf8]{inputenc}
\usepackage[T1]{fontenc}
\usepackage[english]{babel}
\usepackage{lmodern}
\usepackage{amsmath,amssymb,amsthm}
\usepackage[margin=2.6cm]{geometry}
\usepackage{booktabs}
\usepackage{microtype}
\usepackage{graphicx}

\newtheorem{theorem}{Theorem}
\newtheorem{proposition}{Proposition}
\newtheorem{corollary}{Corollary}
\theoremstyle{definition}

\newtheorem{assumption}{Assumption}
\newtheorem{remark}{Remark}

\title{\textbf{When Search Eats the Web\\ \large A Model of Corpus Erosion under Generative Extraction}}
\author{Sylvain Peyronnet\\ \small IBOU}
\date{}

\begin{document}
\maketitle

\begin{abstract}
\noindent Generative search engines (GSEs) answer user queries directly from crawled web content. The capture of value from the corpus without a visit returned to the source (we call this capture extraction) diverts the traffic that finances content production.
In response, publishers may restrict crawler access to their websites.

In this paper, we model the crawlable corpus as a common-pool resource: the crawlable commons.
It is described by three quantities: volume, average quality, and lifetime.
Under two types of responses of publishers we prove that extraction degrades all three at once: publishers opt out, renewal loses its funding, and content becomes more perishable. After a given erosion threshold, the corpus goes extinct. A myopic GSE can cross this threshold, a long-run oriented GSE stays below it.

We extend our model to several competing engines and prove, under a concavity condition on the steady-state value of the commons, that the symmetric equilibrium extraction rate is nondecreasing in their number and converges to the threshold. 
Adding users who strictly prefer direct answers, the assumption most favorable to extraction, we prove that the socially optimal extraction rate lies strictly below the erosion threshold, and no higher than the single engine's sustainable optimum. Finally, we discuss seven survival mechanisms.
\end{abstract}

\section{Introduction}\label{sec:intro}

There is an implicit contract between search engine and publishers: access to content is exchanged for traffic. A classical search engine crawls, indexes, and ranks webpages, then sends the user to the source (these webpages). The publisher then monetizes the visit. This contract is simple, and it has financed decades of online content production.

Generative search engines (GSEs) break this contract. A GSE reads documents, synthesizes a direct answer, and retains the user inside its interface. The traffic to the source disappears, and the publisher's revenue with it. 
Throughout this paper, {\em extraction} denotes this capture of value without a visit returned. The {\em extraction rate} $e$ is the fraction of a page's value the GSE keeps. Some studies have estimated this loss of traffic. For instance, \cite{khosravi2026} announces that Google's AI Overviews reduces the daily traffic of exposed Wikipedia articles by about 15\%.
The publishers are already reacting by blocking GSEs, withdrawing corpora, and litigating.

The current literature measures who loses what. In this paper we address a different question. The problem is what remains in the crawlable corpus, what stops entering it, and what happens to what is still there. We formalize the crawlable corpus as a commons in the economic sense \cite{hardin1968, ostrom1990} and we call it the crawlable commons. This is a resource whose value can be extracted by anyone, and whose maintenance is paid by nobody.
It is a textbook object of renewable resource economics \cite{gordon1954}: the corpus renews itself, depreciates, and is extracted from.

Our main contribution is to prove that a minimal model, in which extraction simultaneously affects the composition of the corpus, the funding of its renewal, and its lifetime, yields a critical regime governed by a single threshold. Moreover, we identify the parameters of the model that institutions can move.

The paper is organized as follows. Section~\ref{sec:related} is the related work. In section~\ref{sec:model} we introduce the crawlable commons model. Sections~\ref{sec:dilution}, \ref{sec:depletion} and~\ref{sec:decay} describe the three erosion channels (dilution, depletion, decay). Section~\ref{sec:critical} contains our core results: the erosion threshold, the simultaneous degradation below it, the condition for myopic collapse, the competition theorem, and the welfare result. The section~\ref{sec:mechanisms} presents the survival mechanisms that we propose, providing actionable frameworks for publishers and search engines.

\section{Related work}\label{sec:related}

We present here the scientific literature related to our work, categorized by the topic.

\medskip
\emph{\bf Traffic and substitution.} The effect of direct answers on source traffic has been estimated. For instance, \cite{khosravi2026} reports that Google's AI Overviews reduces the daily traffic of exposed Wikipedia articles by $\sim$ 15\%.
Blocking GSEs does not fully solve the problem. Large publishers that block AI crawlers lose traffic compared with not blocking \cite{strategic2026}.
These two facts match two ingredients of our model: extraction diverts traffic, and closing to crawlers has a cost.

\medskip
\emph{\bf Blocking and corpus composition.} Access restrictions are more common on the most popular sites: $\sim$ 25\% of the top thousand websites restrict AI crawlers, while it's $\sim$ 10\% of the top million \cite{bouchaud2025}. Restrictions also peak among media of high quality.
Opt-outs do not degrade GSEs general knowledge, but it does for specialized domains when major publishers opt out \cite{fan2025}.
These facts document separately the two effects at the center of our model: the best publishers are the first to block crawlers, and the remaining corpus degrades.

\medskip
\emph{\bf Attribution and restitution.} GSEs credit fewer sources than they read: several relevant sites remain uncited per query, and the number of citations returned per additional page read varies from one GSE to another \cite{strauss2026}.
To value each source's contribution to an answer, recent work applies Shapley values to the retrieved documents \cite{ye2025, nematov2025,maxshapley2025}.
This supports some of our survival mechanisms, but without addressing our question: what happens to the volume, quality and lifetime of the corpus when the traffic returned remain insufficient over the long run.

\medskip
\emph{\bf Economics of content production under intermediaries.} The question predates generative models: content aggregation changes how users read \cite{chiou2017}
and the quality newspapers choose to produce \cite{jeon2016}.
Recent work studies content creators facing generative AI. When AI can reuse what a creator publishes, the creator may stop investing effort. The paper \cite{ohayon2026} gives mechanisms to encourage creators to keep producing. \cite{acemoglu2026} shows that collective knowledge can collapse when AI removes the incentive to learn. 
However, a counterpoint exists. In a model of contributed public goods where AI takes the easy tasks, training on user contributions can raise overall quality \cite{gans2024}.
By contrast, in our model production is financed by traffic that extraction dries up, and quality publishers have alternatives that take them out of the commons.

\medskip
\emph{\bf Methods and building blocks.} Game theory has long been applied to information retrieval \cite{tennenholtz2019} and to the economics of search \cite{azzopardi2011}. GSEs build on retrieval-augmented generation (RAG) \cite{lewis2020}, and publishers adapt to them through generative engine optimization (GEO) \cite{aggarwal2024}.

\medskip
Each of these works addresses either traffic, blocking, attribution, or creator incentives. None couples them. Our contribution is this coupling: a single dynamic corpus described by three quantities (volume, quality, lifetime), degraded simultaneously, and subject to competition between several GSEs, with the resulting critical regime and rent dissipation.
\section{The crawlable commons model}\label{sec:model}

The object of our model is the crawlable commons: the part of the web that stays accessible to crawlers, treated as a common-pool resource.

\subsection{Agents}\label{subsec:agents}

The model has two types of agents.

\begin{itemize}
\item Publishers, indexed by their quality $\theta \in [0,1]$, distributed according to a continuous, strictly positive density $f$ on $[0,1]$, with cumulative distribution $F$ and mean $\mathbb{E}[\theta]$. The quality $\theta$ measures the value of the content produced by the publisher (for instance according to reliability, depth, originality, etc.).
\item A GSE, with an extraction rate $e \in [0,1]$. $e$ is the fraction of a page's value that it keeps without sending a visit back. This stands for several things at once: the fraction of visits the answer replaces, the fraction of value the engine keeps, and the pressure publishers respond to. The model does not separate them, $e$ can be seen as an aggregate of all this phenomenon. The case $e = 0$ corresponds to the classical search engine (all the value returns to the sources through visits), the case $e = 1$ to the full direct answer (the GSE keeps everything, no value returns to the publisher).
\end{itemize}

\subsection{State variables}\label{subsec:state}

The commons is described by two state variables in continuous time $t \in \mathbb{R}_+$:

\begin{itemize}
\item $N(t) > 0$: the volume of the crawlable corpus, measured in usable content, not in pages. A page counts for the value it still carries;
\item $Q(t) \in [0,1]$: the average quality of this corpus.
\end{itemize}

The informational value of the corpus is the product $S(t) = N(t)\, Q(t)$. The third quantity of the commons, its lifetime, is carried by the depreciation rate $\lambda$ defined in Section~\ref{sec:decay}. $N(t)$ and $Q(t)$ are corpus-level aggregates. The model does not track  individual publishers or pages. 

\subsection{The three channels}\label{subsec:channels}

Extraction takes visits away from publishers, who make their living by monetizing them. They can react in three ways, and our model treats each way as a channel.

The first channel is {\bf dilution}. A publisher may opt out, changing the composition of the crawlable corpus (Section~\ref{sec:dilution}).

The second channel is {\bf depletion}. A publisher that stays earns less, and lower revenue reduces the production of new content (Section~\ref{sec:depletion}).

The third channel is {\bf decay}. Durable pages are what the engine extracts best, so a publisher may shift toward perishable content, increasing the depreciation rate (Section~\ref{sec:decay}).

\section{First channel: dilution}\label{sec:dilution}

Dilution is the change in the composition of the corpus when publishers opt out. Each publisher compares what staying accessible to the crawler pays with what opting out pays. Theorem~\ref{thm:dilution} characterizes who stays.

\subsection{The publisher's participation choice}

At every instant, each publisher chooses to be open (accessible to the GSE's crawler) or closed (blocking the crawler by any technical means).

The revenue of an open publisher of quality $\theta$ is $r(1-e)\theta$, where $r$ is the revenue per unit of quality generated by organic traffic, and where $(1-e)$ reflects that the direct answer replaces the fraction $e$ of the visits, and voids the associated revenue.

The revenue of a closed publisher is $\omega\theta - \kappa$. Here $\omega$ is the revenue per unit of quality of the alternatives (subscription, licensing, direct audience). The recurring cost $\kappa$ of opting out is independent of $\theta$.

Revenue that does not depend on the participation choice (direct visits, newsletters, donations) enters both sides of the comparison and cancels out. The parameter $r$ carries the search-dependent part of the revenue. This component is confirmed empirically: large publishers that block GenAI crawlers lose traffic \cite{strategic2026}.

\begin{assumption}[parameters of the participation trade-off]\label{ass:participation}
$0 < \kappa < \omega < r$.
\end{assumption}

The assumption $\omega < r$ states that full organic traffic pays more than the alternatives. $\kappa > 0$ means that opting out is costly, while $\kappa < \omega$ states that opting out is profitable at least for the highest quality when traffic no longer pays.

A publisher of quality $\theta$ stays open if and only if $r(1-e)\theta \geq \omega\theta - \kappa$, that is $\theta\,[\omega - r(1-e)] \leq \kappa$. Indifferent publishers, with $\theta$ exactly at the cutoff, are considered to stay open.

\begin{theorem}[dilution by adverse selection]\label{thm:dilution}
Let $e_{\mathrm{out}} := 1 - (\omega-\kappa)/r$ be the opt-out threshold. Under Assumption~\ref{ass:participation}, $e_{\mathrm{out}} \in (0,1)$ and a publisher of quality $\theta$ stays open if and only if $\theta \leq \bar{\theta}(e)$, with
\begin{equation}
\bar{\theta}(e) \;=\;
\begin{cases}
1 & \text{if } e \leq e_{\mathrm{out}},\\[2pt]
\dfrac{\kappa}{\omega - r(1-e)} & \text{if } e > e_{\mathrm{out}},
\end{cases}
\qquad \text{where } \omega - r(1-e) \geq \kappa > 0 \text{ on the second branch}.
\end{equation}
The function $\bar{\theta}$ is continuous, equal to $1$ on $[0, e_{\mathrm{out}}]$, strictly decreasing on $[e_{\mathrm{out}}, 1]$, with $\bar{\theta}(1) = \kappa/\omega < 1$. Consequently:
\begin{itemize}
\item[(a)] the mass of open publishers $F(\bar{\theta}(e))$ equals $1$ on $[0,e_{\mathrm{out}}]$ then decreases strictly;
\item[(b)] the average quality of open publishers $\bar{Q}(e) := \mathbb{E}[\theta \mid \theta \leq \bar{\theta}(e)]$ equals $\mathbb{E}[\theta]$ on $[0,e_{\mathrm{out}}]$ then decreases strictly;
\item[(c)] for any $e > e_{\mathrm{out}}$, the closed publishers are exactly those of quality above $\bar{\theta}(e)$.
\end{itemize}
\end{theorem}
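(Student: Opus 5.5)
The proof is elementary, so I plan to verify each claim directly from the participation inequality $\theta\,[\omega - r(1-e)] \le \kappa$, writing $g(e) := \omega - r(1-e)$. First, $e_{\mathrm{out}} \in (0,1)$ follows from Assumption~\ref{ass:participation}. We have $0<\omega-\kappa<\omega<r$, so $(\omega-\kappa)/r \in (0,1)$.

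The key observation is that $g$ is affine and strictly increasing in $e$ (slope $r>0$), with $g(e_{\mathrm{out}}) = \kappa$. I will split into two cases.

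\emph{Case $e \le e_{\mathrm{out}}$.} Here $g(e) \le \kappa$. If $g(e)\le 0$, the inequality holds for every $\theta\ge 0$. If $0<g(e)\le\kappa$, then $\theta g(e) \le g(e) \le \kappa$ for all $\theta\in[0,1]$. Hence every publisher stays open and $\bar\theta(e)=1$.

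\emph{Case $e > e_{\mathrm{out}}$.} Here $g(e) > \kappa > 0$, so dividing is legitimate. The inequality becomes $\theta \le \kappa/g(e)$, and $\kappa/g(e)<1$. This gives the second branch together with the stated sign condition $g(e) \ge \kappa > 0$, and it proves (c) immediately.

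For the properties of $\bar\theta$: continuity at $e_{\mathrm{out}}$ follows from $\kappa/g(e_{\mathrm{out}})=1$. Strict decrease on $[e_{\mathrm{out}},1]$ holds because $g$ is strictly increasing and positive there. Finally, $\bar\theta(1)=\kappa/\omega<1$ since $\kappa<\omega$.

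Claim (a) follows because $F$ is strictly increasing on $[0,1]$, since the density $f$ is strictly positive. Composing $F$ with $\bar\theta$ therefore gives a function that is constant equal to $F(1)=1$ on the first interval and strictly decreasing afterwards.

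The only step needing a small argument is (b): that $m(x):=\mathbb{E}[\theta\mid\theta\le x] = \int_0^x \theta f(\theta)\,d\theta / F(x)$ is strictly increasing in $x$ on $(0,1]$. I will differentiate, which is valid since $f$ is continuous:
\[
m'(x) = \frac{f(x)}{F(x)}\bigl(x - m(x)\bigr).
\]
This is strictly positive because $m(x)<x$. The strict inequality $m(x)<x$ holds since the conditional distribution on $[0,x]$ has positive density and so is not concentrated at $x$. Then $\bar Q = m\circ\bar\theta$ is constant equal to $m(1)=\mathbb{E}[\theta]$ on $[0,e_{\mathrm{out}}]$, and strictly decreasing afterwards as the composition of a strictly increasing function with a strictly decreasing one. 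Here $\bar\theta$ stays in $[\kappa/\omega,1]\subset(0,1]$, so $m$ is well defined along the way.
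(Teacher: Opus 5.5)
Your proof is correct. For the threshold, the formula for $\bar{\theta}$, its continuity and monotonicity, and parts (a) and (c), it follows the paper's argument. The paper splits first on the sign of $\omega - r(1-e)$ and then compares the cutoff with $1$, whereas you split directly at $e_{\mathrm{out}}$ using $g(e_{\mathrm{out}}) = \kappa$; the two are equivalent.

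The genuine difference is in (b). The paper avoids calculus. For $0 < s < s'$ it writes $\mathbb{E}[\theta \mid \theta \leq s']$ as a strict convex combination of $\mathbb{E}[\theta \mid \theta \leq s] \leq s$ and $\mathbb{E}[\theta \mid s < \theta \leq s'] > s$, both events having positive mass. That argument needs only that every subinterval of $[0,1]$ carries positive mass. It would therefore hold for any fully supported distribution, with or without a continuous density.

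Your route instead uses the continuity of $f$, which the paper does assume, to differentiate the truncated mean. In exchange it gives the explicit slope $m'(x) = \frac{f(x)}{F(x)}\bigl(x - m(x)\bigr)$. With the chain rule this yields $\bar{Q}'(e) = m'(\bar{\theta}(e))\,\bar{\theta}'(e)$ beyond $e_{\mathrm{out}}$, with $m'$ continuous on $[\kappa/\omega, 1]$. This makes transparent the boundedness of $\bar{Q}'$ that the paper later relies on, with only a brief justification, in Proposition~\ref{prop:myopic-comp}~(b).
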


\begin{proof}
If $\omega \leq r(1-e)$, the participation condition holds for every $\theta$ since $\kappa > 0$. If $\omega > r(1-e)$, it rewrites as $\theta \leq \kappa/(\omega - r(1-e))$, which is at most $1$ if and only if $\omega - r(1-e) \geq \kappa$, that is $e \geq e_{\mathrm{out}}$. Assumption~\ref{ass:participation} gives $0 < (\omega-\kappa)/r < 1$, so $e_{\mathrm{out}} \in (0,1)$.

On $(e_{\mathrm{out}}, 1]$, $\omega - r(1-e)$ increases strictly with $e$, so $\bar{\theta}$ decreases strictly. Continuity at $e_{\mathrm{out}}$ holds since $\bar{\theta}(e_{\mathrm{out}}) = \kappa/\kappa = 1$.

The mass of open publishers and the average quality inherit these properties. $F$ is strictly increasing since $f > 0$. So is $s \mapsto \mathbb{E}[\theta \mid \theta \leq s]$: for $0 < s < s'$, $\mathbb{E}[\theta \mid \theta \leq s']$ is a strict convex combination of $\mathbb{E}[\theta \mid \theta \leq s]$ and $\mathbb{E}[\theta \mid s < \theta \leq s']$, both events have positive mass since $f > 0$, and the second term exceeds $s$.
\end{proof}

\begin{remark}\label{rem:plateau}
One could expect opting out to begin at $e_0 = 1 - \omega/r$, the rate at which the alternatives start paying more than organic traffic per unit of quality. It does not, because opting out also incurs the fixed cost $\kappa$. Between $e_0$ and $e_{\mathrm{out}}$, the advantage of opting out is real but too small to repay $\kappa$. We call the interval $(e_0, e_{\mathrm{out}})$ the dead zone. Its width $e_{\mathrm{out}} - e_0 = \kappa/r$ measures the cost of opting out. The model therefore predicts a plateau: below $e_{\mathrm{out}}$ no publisher opts out and beyond it the best publishers leave first (Theorem~\ref{thm:dilution} (c)).
\end{remark}

\subsection{Interpretation}

The best publishers leave first. This is adverse selection, but not through Akerlof’s classical information mechanism \cite{akerlof1970}. Quality is observable here, and high-quality publishers leave because the alternatives pay more as $\theta$ increases, while the opt-out cost does not.
High-quality publishers opt out or negotiate direct licenses. The crawlable corpus consequently becomes negatively selected (low-quality content, content farms, etc. stay in the corpus). 

The consequence reverses the usual framing of blocking. It filters the crawlable corpus, and the open web becomes an adverserly selected subset of the web rather than a sample.
Measurements already show this pattern \cite{bouchaud2025,fan2025}.

\section{Second channel: depletion}\label{sec:depletion}

The second channel is the renewal of the corpus. New content is paid for by the publishers who remain open, so the flow of new content depends on their revenue.

\begin{assumption}[funded renewal and capacity]\label{ass:renewal}
The renewal flow of new content is proportional to the aggregate revenue of open publishers and to the existing volume, and saturates as the volume approaches a capacity $K$:
\begin{equation}
P(t) \;=\; n(e)\, N(t) \left(1 - \frac{N(t)}{K}\right),
\qquad
n(e) \;:=\; \pi\, r\,(1-e)\, \mu(e),
\qquad
\mu(e) := \int_0^{\bar{\theta}(e)} \theta f(\theta)\, d\theta,
\end{equation}
where $\pi > 0$ is the productivity of the publishers investment. It converts revenue into content. New content has the average quality of open publishers, $\bar{Q}(e)$.
\end{assumption}

An open publisher of quality $\theta$ earns $r(1-e)\theta$, so the aggregate revenue of open publishers is $r(1-e)\,\mu(e)$. The factor $\mu(e) = F(\bar{\theta}(e))\,\bar{Q}(e)$ is their mass times their average quality. $n(e)$ is this revenue multiplied by productivity $\pi$. Extraction thus reduces renewal directly through $(1-e)$, and indirectly through $\mu(e)$, which shrinks when the best publishers opt out (Theorem~\ref{thm:dilution}).

The factor $N(1-N/K)$ is the standard logistic form for renewable resource. The capacity $K$ captures limits not modeled explicitely, such as reader attention, supply of topics, etc.

\section{Third channel: decay}\label{sec:decay}

We now formalize the third channel, that is the lifetime of content. All information expires, it can be in a few hours for news articles to years for encyclopedic content. Let $\lambda$ denote the depreciation rate of the corpus: the fraction of value lost per unit of time. Unlike in models of physical phenomena, our $\lambda$ is partly dependent on, and reflects, publishers choices. 

Higher extraction gives publishers an incentive to shift value from durable content toward more perishable formats, such as live data, real-time feeds, interactive tools, or social (UGC) content.

\begin{assumption}[shift toward perishable content]\label{ass:perishability}

The depreciation rate increases linearly with the extraction rate:
\begin{equation}
\lambda(e) = \lambda(0) + \delta\, e, \qquad \lambda(0) > 0,\; \delta > 0,\; \lambda(1) < 1,
\end{equation}
where $\lambda(0)$ is the depreciation rate without extraction and $\delta$ its sensitivity to extraction. $\lambda(1) < 1$ ensures that $\lambda$ remains interpretable as a fraction of value lost per unit of time.
\end{assumption}

Rather than explicitly modeling a publisher's format optimization, we use a reduced-form linear specification to capture the behavioral shift toward perishable content as extraction increases.
The model applies the resulting depreciation rate $\lambda(e)$ uniformly to new and already existing content.

\section{The critical regime}\label{sec:critical}

Now that we have described the three channels, we assemble them and prove the core results: the erosion threshold, the simultaneous degradation regime, and the outcomes of myopic and competitive extraction.

\subsection{Full dynamics}\label{subsec:dynamics}

The three channels combine into a two-equation system:
\begin{align}
\dot{N}(t) &= n(e)\, N(t)\!\left(1 - \frac{N(t)}{K}\right) - \lambda(e)\, N(t), \\[2pt]
\dot{Q}(t) &= n(e)\!\left(1 - \frac{N(t)}{K}\right)\!\bigl(\bar{Q}(e) - Q(t)\bigr).
\end{align}

At a fixed extraction rate $e$, the set of open publishers is constant over time (Theorem~\ref{thm:dilution}), so the corpus evolves only through renewal and depreciation. 
The volume equation has the standard form of renewable-resource dynamics, with the difference that extraction does not remove content directly. It acts here through both coefficients, reducing renewal via $n(e)$ and accelerating decay via $\lambda(e)$.

Depreciation drops out of the average-quality equation because it affects all quality levels at the same rate. Only new content moves the average: the corpus progressively inherits the quality $\bar{Q}(e)$ of the publishers that remain open.

\subsection{The threshold}\label{subsec:threshold}

A single threshold on the extraction rate separates a corpus that survives from a corpus that goes extinct. Proposition~\ref{prop:threshold} identifies this threshold.

\begin{proposition}[erosion threshold]\label{prop:threshold}

Assume $n(0) > \lambda(0)$. Then there exists a unique $e^* \in (0,1)$ such that $n(e^*) = \lambda(e^*)$. For any initial state $N(0) \in (0, K)$, there are three regimes:
\begin{itemize}
\item if $e < e^*$, the volume converges to the interior steady state
$N^*(e) = K\bigl(1 - \lambda(e)/n(e)\bigr) > 0$ and the corpus quality converges to $\bar{Q}(e)$;
\item if $e = e^*$, the volume vanishes critically, in a hyperbolic regime: $N(t) \sim K/(n(e^*)\, t)$;
\item if $e > e^*$, the volume vanishes exponentially, at asymptotic rate $\lambda(e) - n(e)$.
\end{itemize}
\end{proposition}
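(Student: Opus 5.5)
The plan has two parts: existence and uniqueness of $e^*$ from monotonicity, then the three regimes by solving the volume equation explicitly.

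\textbf{Existence and uniqueness of $e^*$.} Set $g(e) := n(e) - \lambda(e)$ on $[0,1]$. Continuity of $g$ follows from Theorem~\ref{thm:dilution}: $\bar{\theta}$ is continuous, so $\mu(e) = \int_0^{\bar{\theta}(e)} \theta f(\theta)\,d\theta$ is continuous, and $\lambda$ is affine. For strict monotonicity, $(1-e)$ is strictly decreasing and positive on $[0,1)$, while $\mu(e) > 0$ is nonincreasing because $\bar{\theta}$ is nonincreasing. Hence $n(e) = \pi r (1-e)\mu(e)$ is strictly decreasing on $[0,1]$. Since $\lambda$ is strictly increasing ($\delta > 0$), $g$ is strictly decreasing. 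At the endpoints, $g(0) = n(0) - \lambda(0) > 0$ by hypothesis and $g(1) = 0 - \lambda(1) < 0$. The intermediate value theorem then gives a unique root $e^* \in (0,1)$, and $g(e) > 0$ exactly when $e < e^*$.

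\textbf{The three regimes.} At fixed $e$ the volume equation decouples from $Q$. It is a Bernoulli/logistic equation $\dot N = a N - b N^2$ with $a := n(e) - \lambda(e) = g(e)$ and $b := n(e)/K > 0$. I will use the substitution $u = 1/N$, which gives the linear ODE $\dot u = -a u + b$. This solves explicitly and splits into three cases.
\begin{itemize}
\item If $a > 0$, then $u(t) \to b/a$, so $N \to a/b = K(1 - \lambda/n) = N^*(e) > 0$. The solution stays positive for all $t$ because $u(0) > 0$ and $u$ moves monotonically toward $b/a > 0$.
\item If $a = 0$, then $u(t) = u(0) + bt$, so $N(t) = 1/(1/N(0) + n(e^*) t/K) \sim K/(n(e^*)\,t)$.
\item If $a < 0$, then $u(t) = (u(0) + b/|a|)e^{|a| t} - b/|a|$, which grows like $e^{|a|t}$. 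Hence $N(t)\,e^{|a|t}$ tends to the positive constant $(1/N(0) + b/|a|)^{-1}$, and the rate is $\lambda(e) - n(e)$.
\end{itemize}

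\textbf{Quality when $e < e^*$.} Write $D := Q - \bar{Q}(e)$. The quality equation gives $\dot D = -h(t)\,D$ with $h(t) = n(e)\bigl(1 - N(t)/K\bigr)$. By the explicit solution, $N(t)$ converges monotonically to $N^* < K$. Since also $N(0) < K$, we get $N(t) < K$ for all $t$, and $h(t) \to n(e)\lambda(e)/n(e) = \lambda(e) > 0$. Therefore $\int_0^\infty h = \infty$, and $D(t) = D(0)\exp(-\int_0^t h) \to 0$, i.e.\ $Q(t) \to \bar{Q}(e)$.

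The main point requiring care is the monotonicity of $n$. On the dead-zone plateau $[0, e_{\mathrm{out}}]$, $\mu$ is constant, so strictness must come from the factor $(1-e)$ alone. Beyond $e_{\mathrm{out}}$, $\mu$ decreases, and there I must check that $\mu$ stays positive so the product remains strictly decreasing; this holds because $\bar{\theta}(e) \geq \kappa/\omega > 0$. The ODE part is routine once the $1/N$ substitution is made.
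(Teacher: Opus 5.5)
Your proof is correct. The existence and uniqueness of $e^*$ is the paper's argument: continuity and strict decrease of $n-\lambda$, then the intermediate value theorem. Your check that $\mu(e) > 0$ (via $\bar{\theta}(e) \geq \kappa/\omega$) supplies a detail the paper leaves implicit when it concludes that a strictly decreasing factor times a nonincreasing one is strictly decreasing.

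The difference is in the ODE part. The paper handles the three regimes separately:
\begin{itemize}
\item for $n > \lambda$ it rewrites the equation as $\dot{N} = (n-\lambda)N(1-N/N^*)$ and invokes the standard stability of the interior steady state;
\item for $n = \lambda$ it solves the equation explicitly;
\item for $n < \lambda$ it uses only the differential inequality $\dot{N} \leq -(\lambda-n)N$.
\end{itemize}
You instead linearize all three cases at once with $u = 1/N$.

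The paper's route is shorter and leans on textbook facts. Yours gives sharper conclusions in two places:
\begin{itemize}
\item \emph{Extinction rate.} The paper's inequality only gives an upper bound, i.e.\ decay at least at rate $\lambda - n$. Your closed form gives $N(t)\,e^{(\lambda-n)t} \to \bigl(1/N(0) + n/(K(\lambda-n))\bigr)^{-1} > 0$, which is what actually justifies ``asymptotic rate $\lambda(e) - n(e)$''.
\item \emph{Quality.} The paper only notes that the factor $n(e)(1 - N(t)/K)$ stays strictly positive, which by itself does not force $Q(t) \to \bar{Q}(e)$. Your observation that this factor tends to $\lambda(e) > 0$ gives the divergent integral that the convergence actually needs.
\end{itemize}

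One small slip: at $e = 1$ you have $b = n(1)/K = 0$, not $b > 0$. Your formula for the case $a < 0$ still holds with $b = 0$, giving pure exponential decay at rate $\lambda(1)$, so nothing breaks.
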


\begin{proof}
We first show that $n - \lambda$ is continuous and strictly decreasing. By Theorem~\ref{thm:dilution} and Assumption~\ref{ass:renewal}, $(1-e)$ decreases strictly and $\mu(e)$ decreases, so $n$ decreases strictly. Assumption~\ref{ass:perishability} ensures that $\lambda$ increases strictly. Both are continuous, so is $n - \lambda$. Now, $n(0) > \lambda(0)$ by assumption, while $n(1) = 0 < \lambda(1)$ by construction. Using the intermediate value theorem we have the existence of $e^*$, its uniqueness coming from the strict monotonicity.

We then prove the three regimes in turn. At fixed $e$, the equation in $N$ is a logistic with linear removal. For $n > \lambda$, it rewrites as $\dot{N} = (n-\lambda) N (1 - N/N^*)$ with $N^* = K(1 - \lambda/n)$, whose interior steady state attracts every $N(0) > 0$. For $n = \lambda$, the equation becomes $\dot{N} = -(n/K)N^2$, whose explicit solution $N(t) = N(0)/(1 + (n/K)N(0) t)$ gives the equivalent $K/(n t)$. For $n < \lambda$, we have $\dot{N} \leq -(\lambda - n) N$, so $N(t) \to 0$, and the asymptotic rate is $\lambda - n$.

Finally, the convergence of $Q(t)$ to $\bar{Q}(e)$ follows from the second equation, since its factor $n(e)(1 - N(t)/K)$ remains strictly positive when $N^* < K$.
\end{proof}

Below the threshold, the crawlable corpus survives at a smaller volume $N^*(e)$, with the quality $\bar{Q}(e)$ of the publishers still open. It's a degraded, but stable, web. At the threshold and beyond, the volume goes to zero, slowly at $e^*$ itself, exponentially fast past it. The equality $n(e^*) = \lambda(e^*)$ says that funded renewal covers depreciation, beyond $e^*$ depreciation is greater than renewal.

One might expect the three channels to produce three distinct critical points. They don't, instead they feed the same two equations, so a single threshold governs all three.

\subsection{The surviving corpus}\label{subsec:degradation}

Once opting out has begun, the three quantities of the surviving corpus (volume, quality, lifetime) degrade together as extraction rises.

\begin{proposition}[simultaneous degradation]\label{prop:degradation}
On $(e_{\mathrm{out}}, e^*)$, the three steady-state quantities of the commons degrade strictly as $e$ increases: the volume $N^*(e)$ decreases, the quality $\bar{Q}(e)$ decreases, the content half-life $\ln 2/\lambda(e)$ decreases. 
\end{proposition}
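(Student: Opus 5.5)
The plan is to treat the three quantities separately, using only results already established: Theorem~\ref{thm:dilution}, Assumptions~\ref{ass:renewal} and~\ref{ass:perishability}, and the monotonicity facts from the proof of Proposition~\ref{prop:threshold}. First we check that $(e_{\mathrm{out}}, e^*)$ is a meaningful interval. If $e^* \le e_{\mathrm{out}}$ the statement is vacuous, so we may assume it is nonempty. On it, $e < e^*$, so Proposition~\ref{prop:threshold} gives $n(e) > \lambda(e)$. Hence $N^*(e) = K(1-\lambda(e)/n(e))$ is well defined and positive, and it is the relevant steady state. The steady-state quality is $\bar{Q}(e)$ and the steady-state depreciation rate is $\lambda(e)$.

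\emph{Lifetime and quality.} The half-life claim is immediate. By Assumption~\ref{ass:perishability}, $\lambda(e) = \lambda(0) + \delta e$ with $\delta > 0$ and $\lambda(0)>0$, so $\lambda$ is positive and strictly increasing. Therefore $\ln 2/\lambda(e)$ is strictly decreasing on all of $[0,1]$, and in particular on the interval. The quality claim is Theorem~\ref{thm:dilution}(b), which states that $\bar{Q}$ decreases strictly beyond $e_{\mathrm{out}}$.

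\emph{Volume.} We need $e \mapsto \lambda(e)/n(e)$ to be strictly increasing on $(e_{\mathrm{out}}, e^*)$. It is enough to show that $n$ is positive and strictly decreasing there while $\lambda$ is positive and strictly increasing. Positivity of $n$ follows from $n > \lambda > 0$ on the interval. For strict decrease, write $n(e) = \pi r (1-e)\mu(e)$. The factor $(1-e)$ is positive and strictly decreasing. The factor $\mu(e) = \int_0^{\bar\theta(e)} \theta f(\theta)\,d\theta$ is positive, and it is strictly decreasing for $e > e_{\mathrm{out}}$: $\bar\theta$ is strictly decreasing there, and the integrand $\theta f(\theta)$ is strictly positive on $(0,1]$. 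A product of two positive functions, one strictly decreasing and the other strictly decreasing, is strictly decreasing, so $n$ is. Then for $e_1 < e_2$ in the interval we have $\lambda(e_1)/n(e_1) < \lambda(e_2)/n(e_1) < \lambda(e_2)/n(e_2)$. Since $K>0$, it follows that $N^*(e_1) > N^*(e_2)$.

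The only step needing care is the strict monotonicity of $n$, which requires positivity of $\mu$ and of $(1-e)$ on the interval. This holds because $e^* < 1$ and $\bar\theta(e) \ge \kappa/\omega > 0$. In fact, strict increase of $\lambda$ alone already forces strict decrease of $N^*$, as long as $n$ is nonincreasing and positive. So the argument is robust even if one uses only the weak monotonicity of $\mu$ used in Proposition~\ref{prop:threshold}.
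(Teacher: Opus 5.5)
Your proof is correct and follows the paper's own argument: the half-life from $\delta>0$ in Assumption~\ref{ass:perishability}, the quality from Theorem~\ref{thm:dilution}~(b), and the volume from the strict increase of $\lambda/n$. You also make explicit the positivity of $n$ and $\lambda$ on the interval (via $n>\lambda>0$ for $e<e^*$), which the paper's one-line quotient argument leaves implicit; this is a worthwhile addition.
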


\begin{proof}
For the volume, we write $N^*(e) = K\bigl(1 - \lambda(e)/n(e)\bigr)$. Since $\lambda$ increases strictly and $n$ decreases strictly, the quotient $\lambda/n$ increases strictly, so $N^*$ decreases strictly. The decrease of $\bar{Q}$ on $(e_{\mathrm{out}}, 1]$ is Theorem~\ref{thm:dilution}~(b). The decrease of the half-life follows from Assumption~\ref{ass:perishability}, since $\delta > 0$.
\end{proof}

Each quantity degrades for its own reason. The volume $N^*(e)$ falls because renewal lacks revenue while depreciation accelerates. The quality $\bar{Q}(e)$ falls with the best publishers opting out first (Theorem~\ref{thm:dilution}~(b)). The half-life $\ln 2/\lambda(e)$ shortens as production shifts toward perishable content (Assumption~\ref{ass:perishability}). The three channels also reinforce one another: a higher $e$ pushes more publishers out, which reduces $\mu(e)$ and slows renewal further, while depreciation keeps accelerating.

Between the two thresholds, degradation is continuous and reversible: lowering the extraction rate improves the steady state.

\subsection{Myopia and the sustainable optimum}\label{subsec:myopia}

So far the extraction rate has been a parameter. We now compare two possible choices by the GSE. One is a myopic rate that maximizes answer quality at every instant, the other is a long-run rate that accounts for its impact on the corpus.

The GSE instantaneous payoff is:
\begin{equation}
u(e, t) \;=\; e\; \bar{Q}(e)\; N(t).
\end{equation}

We restrict our attention to stationary extraction rates. The myopic engine maximizes $u(e,t)$ at every instant; the current volume $N(t)$ enters as a fixed positive factor, so this amounts to maximizing $e\,\bar{Q}(e)$, whatever the volume. 
We denote by $e_{\mathrm{myo}}$ a rate that achieves this maximum and call it the myopic rate. If several rates achieve it, the results below hold for each of them. The long-run oriented engine maximizes the average payoff, the answer quality delivered per unit of time in the long run: $J(e) := \lim_{T \to \infty} \frac{1}{T}\int_0^T u(e,t)\, dt$, which equals $e\,\bar{Q}(e)\,N^*(e)$ below the erosion threshold and $0$ on $[e^*, 1]$ (Proposition~\ref{prop:threshold}). 
Two criteria can evaluate an infinite horizon: discounting future payoffs, or averaging them per unit of time. Both are finite here, since the volume is bounded by $K$. We choose the average payoff because it depends on the steady state alone, which gives $J$ its closed form. A discounted criterion would also value the transition.

\begin{theorem}[myopia and the sustainable optimum]\label{thm:myopia}
\leavevmode
\begin{itemize}
\item[(a)] Every myopic rate satisfies $e_{\mathrm{myo}} \geq e_{\mathrm{out}}$: no rate below the opt-out threshold is myopically optimal.
\item[(b)] The average payoff $J$ is continuous on $[0,1]$, zero at $0$ and on $[e^*, 1]$, strictly positive on $(0, e^*)$: it attains its maximum at an interior rate $e_{\mathrm{eng}} \in (0, e^*)$, the engine's sustainable optimum.
\item[(c)] If $e^* < e_{\mathrm{out}}$, then $e_{\mathrm{myo}} > e^*$: by Proposition~\ref{prop:threshold}, $u(e_{\mathrm{myo}}, t) \to 0$ even though the engine maximizes $u$ at every instant. The condition $e^* < e_{\mathrm{out}}$ reads
\begin{equation}
\pi\,(\omega - \kappa)\, \mathbb{E}[\theta] \;<\; \lambda(0) + \delta \left(1 - \frac{\omega-\kappa}{r}\right).
\end{equation}
\end{itemize}
\end{theorem}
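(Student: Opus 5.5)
Part (a) is a comparison argument. On $[0,e_{\mathrm{out}}]$, Theorem~\ref{thm:dilution}~(b) gives $\bar{Q}(e)=\mathbb{E}[\theta]$, so the myopic objective there is $e\,\mathbb{E}[\theta]$. This is strictly increasing in $e$. Any $e<e_{\mathrm{out}}$ is therefore beaten by $e_{\mathrm{out}}$ itself, since $e_{\mathrm{out}}\,\mathbb{E}[\theta]>e\,\mathbb{E}[\theta]$. Hence every maximizer satisfies $e_{\mathrm{myo}}\geq e_{\mathrm{out}}$. I will also check that a maximizer exists. The function $e\bar{Q}(e)$ is continuous on the compact interval $[0,1]$: $\bar{Q}$ is continuous because $\bar{\theta}$ is continuous, $f$ is continuous, and $\bar{\theta}\geq \kappa/\omega>0$, so the conditioning event keeps positive mass.

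For (b), I will use the closed form of $J$ from Proposition~\ref{prop:threshold}. On $[0,e^*)$ we have $J(e)=e\,\bar{Q}(e)\,K(1-\lambda(e)/n(e))$, and $J=0$ on $[e^*,1]$.
\begin{itemize}
\item Continuity on $[0,e^*)$ follows from continuity of $\bar{Q}$, $\lambda$ and $n$, with $n>\lambda>0$ there.
\item At $e^*$, $N^*(e)\to K(1-\lambda(e^*)/n(e^*))=0$, so $J$ is continuous from the left. It is identically $0$ on the right.
\item Positivity on $(0,e^*)$ holds because $e>0$, $\bar{Q}>0$ and $N^*>0$ there, while $J(0)=0$ because of the factor $e$.
\end{itemize}
Weierstrass then gives a maximizer on $[0,1]$. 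Since $J$ is positive somewhere and zero off $(0,e^*)$, every maximizer lies in $(0,e^*)$.

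For (c), I assume $e^*<e_{\mathrm{out}}$. Part (a) gives $e_{\mathrm{myo}}\geq e_{\mathrm{out}}>e^*$. Proposition~\ref{prop:threshold} then says $N(t)\to0$ exponentially, so $u(e_{\mathrm{myo}},t)\le e_{\mathrm{myo}}N(t)\to0$.

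The only real computation is translating $e^*<e_{\mathrm{out}}$ into the displayed inequality. Since $n-\lambda$ is strictly decreasing with unique root $e^*$, the condition $e^*<e_{\mathrm{out}}$ is equivalent to $n(e_{\mathrm{out}})<\lambda(e_{\mathrm{out}})$. I will evaluate both sides:
\begin{itemize}
\item $\bar{\theta}(e_{\mathrm{out}})=1$, so $\mu(e_{\mathrm{out}})=\mathbb{E}[\theta]$.
\item $r(1-e_{\mathrm{out}})=\omega-\kappa$.
\end{itemize}
Hence $n(e_{\mathrm{out}})=\pi(\omega-\kappa)\mathbb{E}[\theta]$, while $\lambda(e_{\mathrm{out}})=\lambda(0)+\delta(1-(\omega-\kappa)/r)$, which is exactly the stated inequality. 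The equivalence tacitly needs $n(0)>\lambda(0)$ so that $e^*$ is defined, and I will state that hypothesis explicitly.

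The step needing most care is (b): making sure $J$ is continuous at $e^*$, and that the limit defining $J$ really equals $e\bar{Q}(e)N^*(e)$. The latter follows because $N(t)\to N^*$, and Cesàro averages of a convergent bounded function converge to the same limit. Everything else is direct bookkeeping.
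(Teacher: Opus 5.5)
Your proposal is correct and follows the paper's proof essentially step for step: for (a), $e\,\mathbb{E}[\theta]$ is strictly increasing on the plateau $[0,e_{\mathrm{out}}]$; for (b), $J$ is continuous, zero outside $(0,e^*)$ and positive inside, so it has an interior maximizer; for (c), $e^*<e_{\mathrm{out}}$ is equivalent to $n(e_{\mathrm{out}})<\lambda(e_{\mathrm{out}})$ by strict monotonicity of $n-\lambda$, evaluated with $\mu(e_{\mathrm{out}})=\mathbb{E}[\theta]$ and $r(1-e_{\mathrm{out}})=\omega-\kappa$. Your extra checks fill in points the paper leaves implicit: existence of a myopic maximizer, the Cesàro justification of the closed form of $J$, and stating the standing hypothesis $n(0)>\lambda(0)$ explicitly.
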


\begin{proof}
(a) On $[0,e_{\mathrm{out}}]$, $\bar{Q}(e) = \mathbb{E}[\theta]$ (Theorem~\ref{thm:dilution}~(b)), so $e\,\bar{Q}(e)$ is strictly increasing. Every maximizer therefore satisfies $e_{\mathrm{myo}} \geq e_{\mathrm{out}}$.

(b) Below the threshold, $J(e) = e\,\bar{Q}(e)\,N^*(e)$, and $J = 0$ on $[e^*, 1]$. Continuity on $[0,1]$ holds because $N^*(e) \to 0$ as $e \to e^{*-}$. On $(0, e^*)$, $J$ is strictly positive, since $\bar{Q} > 0$ and $N^* > 0$, while $J(0) = 0$. A continuous function on a compact set, zero at the boundary and positive inside, attains its maximum inside, so we have $e_{\mathrm{eng}} \in (0, e^*)$.

(c) If $e^* < e_{\mathrm{out}}$, then $e_{\mathrm{myo}} \geq e_{\mathrm{out}} > e^*$ by (a), and Proposition~\ref{prop:threshold} gives $N(t) \to 0$, so we have $u(e_{\mathrm{myo}}, t) \to 0$. For the condition, note that $e^* < e_{\mathrm{out}}$ is equivalent to $n(e_{\mathrm{out}}) < \lambda(e_{\mathrm{out}})$, by strict monotonicity of $n - \lambda$. On $[0, e_{\mathrm{out}}]$, $\mu = \mathbb{E}[\theta]$, and $1 - e_{\mathrm{out}} = (\omega-\kappa)/r$, so $n(e_{\mathrm{out}}) = \pi\, r\,(1-e_{\mathrm{out}})\,\mathbb{E}[\theta] = \pi(\omega-\kappa)\,\mathbb{E}[\theta]$. Writing $n(e_{\mathrm{out}}) < \lambda(e_{\mathrm{out}})$ with these values gives the stated inequality.
\end{proof}

The myopic objective contains no brake below the opt-out threshold. Quality is flat (Theorem~\ref{thm:dilution}~(b)) so raising $e$ improves the answers at any given volume and pushes the myopic engine at least to the point where the best publishers start opting out. 

The long-run trade-off is different. Raising $e$ increases the extracted fraction but lowers the steady-state value of the commons, $\bar{Q}(e)\,N^*(e)$, which disappears at the erosion threshold. Maximizing the average payoff is therefore enough to keep a single long-run oriented engine below $e^*$. Under the condition of part (c), myopia takes the engine past the threshold: maximizing answer quality at every instant eventually drives that quality to zero.

\begin{remark}\label{rem:condition-c}
$e^* < e_{\mathrm{out}}$ states that erosion bites before the first opt-out: the revenue that survives at the opt-out margin, $\pi(\omega-\kappa)\,\mathbb{E}[\theta]$, is insufficient to fund a renewal covering depreciation. In this regime, the myopic engine crosses the threshold. In the complementary regime ($e^* \geq e_{\mathrm{out}}$), the myopic rate can lie on either side of the threshold depending on the shape of $\bar{Q}$ beyond $e_{\mathrm{out}}$: myopic collapse is then possible, not necessary. 
\end{remark}

\subsection{Competition between engines}\label{subsec:competition}

With several (say $m$) engines extracting from the same commons, the decisive argument is not the myopia of one engine: even long-run oriented engines overextract because each engine bears the full cost of its own restraint but collects only one $m$-th of the value that restraint preserves.

The $m$-engine game is as follows: users queries are split equally among $m \geq 1$ engines, each engine chooses a stationary extraction rate $e_i \in [0,1]$, and the commons responds to the average rate $\bar{E} = \frac{1}{m}\sum_j e_j$ (the revenue of an open publisher is $r(1-\bar{E})\theta$, all functions of the previous sections apply at $\bar{E}$). 

Let $\Phi(x) := \bar{Q}(x)\, N^*(x)$ denote the steady-state value of the commons, extended by $0$ on $[e^*, 1]$: $\Phi$ is continuous, strictly decreasing on $[0, e^*]$, with $\Phi(0) > 0$ and $\Phi(e^*) = 0$. Engine $i$ serves one $m$-th of the queries at its own rate $e_i$, so its average payoff is
\begin{equation}
J_i(e_i, e_{-i}) \;=\; \frac{1}{m}\; e_i\; \Phi(\bar{E}).
\end{equation}

For $m = 1$, we recover $J$ from the previous subsection and its maximizer $e_{\mathrm{eng}}$. Summed over the $m$ engines, the average payoffs add up to $\bar{E}\,\Phi(\bar{E})$: the aggregate payoff.

By Theorem~\ref{thm:dilution}, $\bar{\theta}$ is constant below $e_{\mathrm{out}}$ and strictly decreasing beyond, so $\bar{Q}$, $N^*$ and $\Phi$ are differentiable except at $e_{\mathrm{out}}$, where their slope jumps. This point is the kink.

\begin{assumption}[regularity]\label{ass:regularity}
$\Phi$ is concave on $[0, e^*]$, differentiable except possibly at the kink $e_{\mathrm{out}}$, where its slope decreases.
\end{assumption}

Assumption~\ref{ass:regularity} is not implied by Assumptions~\ref{ass:participation} to~\ref{ass:perishability}: it imposes that each additional unit of extraction destroys at least as much steady-state value as the previous one, and $\Phi$ combines the quotient $\lambda/n$ inside $N^*$, the average quality of open publishers $\bar{Q}$ (a truncated mean), and the kink at $e_{\mathrm{out}}$, which do not guarantee this in general. The assumption is satisfiable (see Appendix~\ref{sec:appendix}). 

\begin{theorem}[competition and rent dissipation]\label{thm:competition}
Under Assumptions~\ref{ass:participation} to~\ref{ass:regularity} and the conditions of Proposition~\ref{prop:threshold}:
\begin{itemize}
\item[(a)] there exists a unique sustainable symmetric Nash equilibrium (with average rate strictly below $e^*$): all engines play the same rate $\hat{e}(m) \in (0, e^*)$, characterized by the subgradient condition
\begin{equation}
m\,\Phi(\hat{e}) + \hat{e}\,\Phi'_-(\hat{e}) \;\geq\; 0 \;\geq\; m\,\Phi(\hat{e}) + \hat{e}\,\Phi'_+(\hat{e}),
\end{equation}
where $\Phi'_-$ and $\Phi'_+$ denote the left and right derivatives, which reduces to $m\,\Phi(\hat{e}) + \hat{e}\,\Phi'(\hat{e}) = 0$ at every point of differentiability;
\item[(b)] $\hat{e}(1) = e_{\mathrm{eng}}$; $\hat{e}(m)$ is nondecreasing in $m$, strictly increasing whenever $\hat{e}(m) \neq e_{\mathrm{out}}$, and the possible kink plateau covers at most finitely many values of $m$; in particular $\hat{e}(m) > e_{\mathrm{eng}}$ for every $m \geq 2$, except possibly when $e_{\mathrm{eng}} = \hat{e}(m) = e_{\mathrm{out}}$;
\item[(c)] $\hat{e}(m) \to e^*$ as $m \to \infty$, and the aggregate payoff $\hat{e}(m)\,\Phi(\hat{e}(m)) \to 0$: the rent of the commons, what its exploitation still earns in total, dissipates entirely;
\item[(d)] every sustainable equilibrium whose average rate differs from $e_{\mathrm{out}}$ is symmetric, hence equal to $\hat{e}(m)$; at average rate $e_{\mathrm{out}}$, asymmetric equilibria may exist.
\end{itemize}
\end{theorem}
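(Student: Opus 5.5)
The plan is to reduce the $m$-engine game to a one-dimensional best-response problem in engine $i$'s own rate, holding the others' sum $s_{-i} = \sum_{j\neq i} e_j$ fixed. Writing $\bar E = (e_i + s_{-i})/m$, engine $i$ maximizes $g_i(e_i) := e_i\,\Phi\bigl((e_i+s_{-i})/m\bigr)$. By Assumption~\ref{ass:regularity}, $\Phi$ is concave, nonnegative and strictly decreasing on $[0,e^*]$. The product of the nonnegative increasing linear function $e_i$ with a nonnegative concave decreasing function is concave. To see this, check it through one-sided derivatives: the slope $\Phi(\bar E) + (e_i/m)\Phi'_\pm(\bar E)$ is nonincreasing in $e_i$, because both terms are nonincreasing. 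So $g_i$ is concave on the region where $\bar E \le e^*$ and vanishes beyond it. Hence a best response is characterized by the first-order subgradient condition
\[
\Phi(\bar E) + \tfrac{e_i}{m}\Phi'_-(\bar E) \ge 0 \ge \Phi(\bar E) + \tfrac{e_i}{m}\Phi'_+(\bar E).
\]
An interior solution exists because the slope is $\Phi(\bar E)>0$ at $e_i=0$ when $\bar E<e^*$, while the payoff falls to $0$ as $\bar E\to e^*$. Imposing symmetry $e_i = \hat e = \bar E$ and multiplying by $m$ gives exactly the displayed condition in (a).

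For uniqueness, define $h_m(x) := m\Phi(x) + x\Phi'(x)$, understood as a set-valued map at the kink. On $(0,e^*)$ it is strictly decreasing: $m\Phi$ is strictly decreasing, and $x\Phi'(x)$ is nonincreasing since $\Phi'\le 0$ is nonincreasing and $x$ is increasing. It is positive near $0$ because $h_m(0)=m\Phi(0)>0$. Near $e^*$ it tends to $e^*\Phi'_-(e^*) \le 0$. Strictly, $\Phi'_-(e^*)<0$: concavity and $\Phi(e^*)=0<\Phi(0)$ force $\Phi'_-(e^*)\le -\Phi(0)/e^*<0$. So there is a unique crossing $\hat e(m)\in(0,e^*)$. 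For $m=1$ the condition is exactly the optimality condition for the concave function $J(e)=e\Phi(e)$, so $\hat e(1)=e_{\mathrm{eng}}$. This also shows $e_{\mathrm{eng}}$ is unique under Assumption~\ref{ass:regularity}.

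For (b), use $h_{m+1}(x)=h_m(x)+\Phi(x)$ with $\Phi>0$ on $(0,e^*)$. Then $h_{m+1}(\hat e(m))>0$ as a set, so the crossing moves weakly right. It moves strictly right unless $\hat e(m)$ sits at the kink, where the subgradient interval $[m\Phi+x\Phi'_+,\, m\Phi+x\Phi'_-]$ can still contain $0$. The plateau at $e_{\mathrm{out}}$ lasts only while $m\Phi(e_{\mathrm{out}}) \le -e_{\mathrm{out}}\Phi'_+(e_{\mathrm{out}})$. Since the left side grows linearly in $m$ and the right side is fixed, only finitely many $m$ qualify.

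For (c), suppose $\hat e(m)\le \bar x<e^*$ along a subsequence. Then $m\Phi(\hat e)\ge m\Phi(\bar x)\to\infty$, while $\hat e\,\Phi'_+(\hat e)$ stays bounded below by $e^*\Phi'_-(e^*)$, which is finite by concavity. This contradicts the right inequality. So $\hat e(m)\to e^*$, and by continuity of $\Phi$ with $\Phi(e^*)=0$ the aggregate payoff $\hat e\,\Phi(\hat e)\to 0$.

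For (d), take any sustainable equilibrium with average $\bar E\ne e_{\mathrm{out}}$. There $\Phi$ is differentiable, so each engine playing $e_i>0$ satisfies $\Phi(\bar E)+(e_i/m)\Phi'(\bar E)=0$. Since $\Phi'(\bar E)<0$, this pins down $e_i = -m\Phi(\bar E)/\Phi'(\bar E)$. An engine playing $e_i=0$ is impossible, because its slope would be $\Phi(\bar E)>0$. Hence all $e_i$ are equal, and the equilibrium is the symmetric one. At $\bar E=e_{\mathrm{out}}$, each $e_i$ need only lie in a nondegenerate interval determined by $\Phi'_\pm$, which permits asymmetric profiles.

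The main obstacle is the concavity of each engine's own payoff, which is what makes the first-order condition sufficient. I also have to rule out deviations that push $\bar E$ past $e^*$ (payoff $0$, harmless) and handle the kink, where the right derivative is more negative, consistent with concavity. I will carry this out first via the one-sided-derivative monotonicity argument above.
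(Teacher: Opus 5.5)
Your route is the paper's: concavity of each engine's payoff in its own rate, the subgradient characterization, strict decrease of $m\Phi(e) + e\,\Phi'(e)$, and the shift by $\Phi$ when an engine is added. That shift gives monotonicity, the finite kink plateau and the limit $e^*$. Your one-sided-derivative argument for concavity and strict decrease is cleaner than the paper's. The paper differentiates $\Phi$ twice, although Assumption~\ref{ass:regularity} only grants concavity and differentiability off the kink. You also rightly treat deviations that push $\bar{E}$ past $e^*$ separately, since the extension of $\Phi$ by $0$ is not concave across $e^*$.

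The genuine gap is in (d). You claim that every engine with $e_i > 0$ satisfies $\Phi(\bar{E}) + (e_i/m)\,\Phi'(\bar{E}) = 0$, but this holds only for interior rates. An engine at the corner $e_i = 1$ needs only $\Phi(\bar{E}) + (1/m)\,\Phi'(\bar{E}) \geq 0$. So your argument does not exclude a profile in which some engines play $1$ and the rest play a common interior rate. You need the extra step the paper supplies. An interior engine forces $-m\Phi(\bar{E})/\Phi'(\bar{E}) = e_j < 1$, whereas the corner requires $-m\Phi(\bar{E})/\Phi'(\bar{E}) \geq 1$, a contradiction. If all engines play $1$, then $\bar{E} = 1 \geq e^*$, which is not sustainable.

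Two smaller points. In (c), concavity does not make $\Phi'_-(e^*)$ finite, since a concave function can have an infinite one-sided slope at an endpoint. Instead, bound $\hat{e}\,\Phi'_+(\hat{e})$ from below by $\bar{x}\,\Phi'_+(\bar{x})$, which is finite once $\bar{x}$ is taken in $(0,e^*)$, where one-sided derivatives of a concave function are finite. Also, you do not spell out the clause $\hat{e}(m) > e_{\mathrm{eng}}$ for $m \geq 2$. It follows at once from strict growth at $m = 1$ when $e_{\mathrm{eng}} \neq e_{\mathrm{out}}$, together with weak monotonicity.
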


\begin{proof}
At a symmetric profile $e_i = e$, the derivative of an engine's payoff w.r.t. its own rate equals $G_m(e)/m^2$, where $G_m(e) := m\,\Phi(e) + e\,\Phi'(e)$, written $G_m^{\pm}$ when the one-sided derivatives $\Phi'_{\pm}$ are needed at the kink.

(a) We first prove that engine $i$'s payoff is concave in its own rate. At every point of differentiability, its second derivative with respect to $e_i$ equals $\frac{2}{m^2}\Phi'(\bar{E}) + \frac{e_i}{m^3}\Phi''(\bar{E})$, which is nonpositive because $\Phi' \leq 0$ and $\Phi'' \leq 0$. At the kink, the slope of $\Phi$ drops, and a drop in slope preserves concavity. By concavity, a rate $e_i$ is a best response to the other engines' rates if and only if it satisfies the subgradient condition, which at a symmetric profile reads $G_m^-(e) \geq 0 \geq G_m^+(e)$. Wherever $\Phi$ is differentiable, $G_m^-$ and $G_m^+$ coincide and decrease strictly, because $(m+1)\Phi' + e\Phi'' < 0$. At the kink, $G_m^+(e_{\mathrm{out}}) < G_m^-(e_{\mathrm{out}})$, again because the slope of $\Phi$ drops. Since $G_m$ decreases, the condition can hold at only one point. It does hold in $(0, e^*)$, because $G_m^{\pm}(0) = m\,\Phi(0) > 0$ while $G_m^{\pm}(e^{*-}) = e^*\,\Phi'_{\pm}(e^{*-}) < 0$. This unique point is a zero of $G_m$, or $e_{\mathrm{out}}$ itself when $G_m^-(e_{\mathrm{out}}) \geq 0 > G_m^+(e_{\mathrm{out}})$.

(b) For $m = 1$, the average payoff equals $J$, so the sustainable equilibria of the one-engine game are the maximizers of $J$, which lie in $(0, e^*)$ by Theorem~\ref{thm:myopia}~(b). By the uniqueness proved in (a), $e_{\mathrm{eng}}$ is unique and $\hat{e}(1) = e_{\mathrm{eng}}$. For $m' > m$, we have $G_{m'} = G_m + (m'-m)\,\Phi > G_m$ wherever $\Phi > 0$. Adding engines thus raises $G$, so its zero moves right and $\hat{e}(m)$ is nondecreasing. Away from the kink the growth is strict, because $\hat{e}(m)$ is then a zero of $G_m$, so $G_{m+1}(\hat{e}(m)) = \Phi(\hat{e}(m)) > 0$ and $\hat{e}(m+1) > \hat{e}(m)$. The kink can hold $\hat{e}$ only for finitely many $m$, because $G_m^+(e_{\mathrm{out}})$ grows by $\Phi(e_{\mathrm{out}}) > 0$ with each added engine and eventually turns positive, after which the subgradient condition fails at $e_{\mathrm{out}}$. Finally, if $e_{\mathrm{eng}} \neq e_{\mathrm{out}}$, then $G_1^+(e_{\mathrm{eng}}) = 0$ at this interior maximizer, so $G_m^+(e_{\mathrm{eng}}) = (m-1)\,\Phi(e_{\mathrm{eng}}) > 0$ for $m \geq 2$, hence $\hat{e}(m) > e_{\mathrm{eng}}$. The case $e_{\mathrm{eng}} = e_{\mathrm{out}}$ is the stated exception.

(c) For any $e < e^*$, $G_m^+(e) \to +\infty$ as $m$ grows, so $\hat{e}(m)$ exceeds $e$ for $m$ large enough. Since $\hat{e}(m)$ is nondecreasing and exceeds every $e < e^*$ eventually, it converges to $e^*$. By continuity of $\Phi$, the aggregate payoff $\hat{e}(m)\,\Phi(\hat{e}(m))$ tends to $e^*\,\Phi(e^*) = 0$.

(d) Consider a sustainable equilibrium with average rate $\bar{E} < e^*$, and suppose first $\bar{E} \neq e_{\mathrm{out}}$, so that $\Phi'(\bar{E})$ exists. Concavity and strict decrease of $\Phi$ make every one-sided derivative used here finite and strictly negative. No engine plays $0$, because its payoff derivative there equals $\frac{1}{m}\Phi(\bar{E}) > 0$. If some but not all engines played $1$, the others would be interior with $e_j = -m\Phi(\bar{E})/\Phi'(\bar{E}) < 1$, while the corner $e_i = 1$ requires $-m\Phi(\bar{E})/\Phi'(\bar{E}) \geq 1$, a contradiction. All engines at $1$ would force $\bar{E} = 1 \geq e^*$, excluded.\\ Every rate is therefore interior and satisfies $m\,\Phi(\bar{E}) + e_i\,\Phi'(\bar{E}) = 0$, so the $e_i$ are all equal and the equilibrium is $\hat{e}(m)$. Suppose now $\bar{E} = e_{\mathrm{out}}$. An interior rate must lie in $\bigl[-m\Phi/\Phi'_+,\, -m\Phi/\Phi'_-\bigr]$, everything evaluated at $e_{\mathrm{out}}$, and the corner rate $1$ only requires $m\Phi + \Phi'_- \geq 0$ there. Some engine lies in $(0,1)$ because $e_{\mathrm{out}} < 1$, which puts $-m\Phi/\Phi'_+$ below $1$, so every rate, corner included, lies in this interval intersected with $(0,1]$. Conversely, by concavity of each engine's payoff, every profile with average rate $e_{\mathrm{out}}$ and components in this set is an equilibrium. These are the asymmetric equilibria of the statement.
\end{proof}

Competition has a unique sustainable symmetric equilibrium for every number of engines. As the number increases, the equilibrium extraction rate rises from the single-engine sustainable optimum toward the erosion threshold, with the kink possibly holding it constant for finitely many values of $m$. The commons survives at every $m$, but its aggregate payoff vanishes as $m$ grows. Away from the kink, every sustainable equilibrium is symmetric: the sequence $\hat{e}(m)$ describes the entire equilibrium set.

\begin{remark}[mutual-ruin equilibria]\label{rem:mutual-ruin}
Theorem~\ref{thm:competition} characterizes the equilibria where the commons lives. For $m \geq 1/(1-e^*)$, dead equilibria exist as well: with all engines at $1$, everyone earns zero, and a lone deviation to $0$ lowers the average rate only to $(m-1)/m \geq e^*$, so no engine can revive the commons alone. These mutual-ruin equilibria are Pareto-dominated: every engine does strictly better at $\hat{e}(m)$. During the decline, the engines still extract from the remaining volume, and these gains count for nothing in an average over an infinite horizon: this is why no engine gains by changing its rate alone.
\end{remark}

Long-run oriented competition therefore does not cross the threshold. The rate $\hat{e}(m)$ climbs toward $e^*$, pausing at the kink for at most finitely many $m$. The commons survives at every $m$, with a volume $N^*(\hat{e}(m))$ and an aggregate payoff that both tend to zero as $m$ grows.

This is the rent dissipation result for open-access resources, established for fisheries \cite{gordon1954} and transposed to the crawlable corpus. It is the tragedy of the commons \cite{hardin1968} in the strict sense: the gain of extraction stays with the engine while the damages are shared, so each engine internalizes one $m$-th of the future of the commons.
The myopic case remains: Proposition~\ref{prop:myopic-comp} shows that with enough engines, myopic competition pushes extraction to its maximum.

\begin{proposition}[myopic competition]\label{prop:myopic-comp}
In the instantaneous game where each engine maximizes $\frac{1}{m}\,e_i\,\bar{Q}(\bar{E})\,N(t)$:
\begin{itemize}
\item[(a)] every symmetric equilibrium satisfies $\hat{e}_{\mathrm{myo}}(m) \geq e_{\mathrm{out}}$;
\item[(b)] let $\|\bar{Q}'\|_\infty$ be the supremum over $[0,1]$ of the absolute values of the derivatives of $\bar{Q}$ where they exist and of its one-sided derivatives at the kink, and $\bar{m} := \|\bar{Q}'\|_\infty\,/\,\bar{Q}(1)$; for every integer $m > \bar{m}$, playing $e_i = 1$ is strictly dominant: the unique Nash equilibrium is $\bar{E} = 1 > e^*$, and the extinction regime of Proposition~\ref{prop:threshold} applies.
\end{itemize}
\end{proposition}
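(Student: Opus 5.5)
The approach is to work at a fixed instant $t$ with $N(t)>0$. The positive factor $N(t)/m$ then drops out, and engine $i$ maximizes $g_i(e_i) := e_i\,\bar{Q}(\bar{E})$, where $\bar{E} = (e_i + \sum_{j\neq i} e_j)/m$. From Theorem~\ref{thm:dilution}~(b) we use two facts: $\bar{Q}$ is constant, equal to $\mathbb{E}[\theta]$, on $[0,e_{\mathrm{out}}]$, and nonincreasing and Lipschitz on $[0,1]$. The Lipschitz property holds because $\bar{Q}$ is piecewise differentiable with one-sided derivatives bounded by $\|\bar{Q}'\|_\infty$, which I take to be finite as the statement implicitly assumes. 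Both parts then reduce to one-sided derivative computations for $g_i$.

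For (a), take a symmetric profile $e_i = e$ with $e < e_{\mathrm{out}}$. For a small upward deviation $e_i = e+h$, the average becomes $e + h/m$, which stays below $e_{\mathrm{out}}$. So $g_i(e+h) = (e+h)\,\mathbb{E}[\theta] > e\,\mathbb{E}[\theta] = g_i(e)$, and the deviation is strictly profitable. Hence $e$ is not an equilibrium, and every symmetric equilibrium satisfies $\hat{e}_{\mathrm{myo}}(m) \geq e_{\mathrm{out}}$. This is the same argument as Theorem~\ref{thm:myopia}~(a).

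For (b), the plan is to show that for fixed $e_{-i}$, $g_i$ is strictly increasing on $[0,1]$, which makes $e_i=1$ the unique best response whatever the others play. Where $\bar{Q}$ is differentiable at $\bar{E}$, the derivative is
\begin{equation}
g_i'(e_i) = \bar{Q}(\bar{E}) + \frac{e_i}{m}\,\bar{Q}'(\bar{E}) \;\geq\; \bar{Q}(1) - \frac{\|\bar{Q}'\|_\infty}{m} \;>\; 0,
\end{equation}
using three facts: $\bar{Q}$ is nonincreasing, so $\bar{Q}(\bar{E}) \geq \bar{Q}(1)$; $e_i \leq 1$; and $m > \bar{m}$. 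The same bound holds for the right and left derivatives at the single point where $\bar{E} = e_{\mathrm{out}}$, by using the corresponding one-sided derivative of $\bar{Q}$. We need $\bar{Q}(1)>0$, which holds since $\bar{\theta}(1) = \kappa/\omega > 0$ and $f>0$, so $\bar{m}$ is well defined. Since $g_i$ is continuous with positive one-sided derivatives everywhere, it is strictly increasing; a short mean-value argument covers the finitely many exceptional points. So $e_i = 1$ is strictly dominant, and the unique Nash equilibrium is $e_j = 1$ for all $j$, giving $\bar{E}=1$. Finally, $e^* < 1$ by Proposition~\ref{prop:threshold}, so the extinction regime $e > e^*$ applies.

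The main obstacle is the regularity bookkeeping in (b). We need $\bar{Q}$ to be differentiable away from $e_{\mathrm{out}}$ with bounded derivative, and the kink must be handled so that ``positive one-sided derivatives everywhere'' truly yields strict monotonicity. The first should follow from the closed form of $\bar{\theta}$ on $(e_{\mathrm{out}},1]$ together with the continuity of $f$. For the second, I would first try a piecewise argument on $[0, m e_{\mathrm{out}} - \sum_{j\ne i}e_j]$ and its complement, rather than invoking nonsmooth analysis.
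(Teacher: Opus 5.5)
Your proposal is correct and follows the paper's proof essentially step for step. Part (a) is the same upward-deviation argument on the flat part of $\bar{Q}$, and part (b) is the same lower bound $\bar{Q}(1) - \|\bar{Q}'\|_\infty/m > 0$ on the own-rate derivative, uniform in the other engines' rates; your piecewise mean-value argument at the kink only makes explicit the step from positive one-sided derivatives to strict monotonicity, which the paper leaves implicit.
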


\begin{proof}
(a) Fix a symmetric profile at a rate $e < e_{\mathrm{out}}$. On a neighborhood of $e$, $\bar{Q}$ is constant (Theorem~\ref{thm:dilution}~(b)), so the derivative of engine $i$'s payoff with respect to its own rate equals $\frac{1}{m}N(t)\,\bar{Q}(e)$, which is strictly positive. Deviating upward is therefore profitable, and no symmetric equilibrium exists below $e_{\mathrm{out}}$.

(b) The function $\bar{Q}$ is bounded below by $\bar{Q}(1) = \mathbb{E}[\theta \mid \theta \leq \kappa/\omega]$, which is strictly positive, and its derivative is bounded on $[0,1]$, because the one-sided slopes of $\bar{\theta}$ are finite, including at $e_{\mathrm{out}}$. The derivative of engine $i$'s payoff with respect to its own rate satisfies
\begin{equation*}
\frac{1}{m}N(t)\Bigl[\bar{Q}(\bar{E}) + \frac{e_i}{m}\bar{Q}'(\bar{E})\Bigr] \;\geq\; \frac{1}{m}N(t)\Bigl[\bar{Q}(1) - \frac{1}{m}\|\bar{Q}'\|_\infty\Bigr] \;>\; 0 \qquad \text{for every } m > \bar{m}.
\end{equation*}
The bound does not depend on the others' rates: the payoff of engine $i$ is strictly increasing in its own rate, and $e_i = 1$ is strictly dominant.
\end{proof}

Part (a) extends the myopic blindness of Theorem~\ref{thm:myopia}~(a) to any number of engines: below the opt-out threshold nothing in the instantaneous payoff pushes back, so no myopic profile settles there. Part (b) is stronger: past the critical number $\bar{m}$, full extraction is each engine's best move whatever the others play, and the corpus enters the extinction regime.

Long-run oriented competition converges to the erosion threshold from below, $\hat{e}(m) \to e^*$; myopic competition pushes extraction to its maximum, $\bar{E} = 1 > e^*$, as soon as there is enough engines.

\begin{remark}[scope of the game]\label{rem:game-scope}
The result is established in stationary rates under the average payoff criterion. The equal split of queries is a second restriction: with market shares $s_i$, the commons would respond to $\sum_i s_i e_i$, and what would matter is concentration rather than the number of engines. The symmetric case isolates the effect of the number alone.
\end{remark}

Theorem~\ref{thm:competition} constrains the survival mechanisms of Section~\ref{sec:mechanisms}: under competition, lowering one's rate alone is costly unless the other engines commit to the same. Outside the model, competitive pressure may also shorten the horizon over which engines optimize. If it does, Proposition~\ref{prop:myopic-comp} applies and extraction goes to its maximum. Either way, each engine extracts from a resource whose maintenance it does not pay for.

\subsection{Welfare}\label{subsec:welfare}

So far we have evaluated extraction only through the engine's average payoff. Users enter the model as a stream of queries and derive no surplus of their own, so the model cannot weigh their convenience gain from direct answers against the erosion of the commons. This subsection gives users a surplus, under the assumption most favorable to extraction: for a given corpus, users prefer more extraction.

\begin{assumption}[user surplus]\label{ass:user-surplus}
At corpus state $(N(t), Q(t))$, users derive the surplus
\begin{equation}
U(t) \;=\; \bigl[a\,e + b\,(1-e)\bigr]\, Q(t)\, N(t),
\qquad a > b \geq 0.
\end{equation}
\end{assumption}

The specification mirrors the engine's payoff of Section~\ref{subsec:myopia}. Queries draw on the informational value $Q(t) N(t)$ of the corpus. A fraction $e$ of this value reaches the user as direct answers, valued $a$ per unit, and a fraction $1-e$ as organic visits, valued $b$ per unit. The gap $a - b$ is the convenience premium of generative search. For a given corpus, $\partial U(t)/\partial e = (a-b)\,Q(t) N(t) > 0$, so more extraction benefits users. The premium fits the Delphic costs of \cite{broder2024}, {\em i.e.} the non-monetary costs of search: time, attention, interaction. The direct answer delivers the information at a lower Delphic cost than the visit it replaces. The case $b = 0$, where visits carry no value for users, is the most favorable to extraction and the assumption covers it.

Users in this model choose nothing. They take the answers as served, and they always prefer more extraction. Assumption~\ref{ass:user-surplus} makes them myopic on purpose. This myopia is individually rational, because one user's clicks leave the corpus unchanged, so giving up the direct answer costs convenience now and gains nothing later. The question is therefore where the users' interest lies once the corpus has responded to the rate.

Social welfare adds the users' surplus and the engine's payoff. As in Section~\ref{subsec:myopia}, we restrict our attention to stationary rates and evaluate the long run through the average payoff criterion. For $e < e^*$, $N(t) \to N^*(e)$ and $Q(t) \to \bar{Q}(e)$ (Proposition~\ref{prop:threshold}), so
\begin{equation}
W(e) \;:=\; \lim_{T \to \infty} \frac{1}{T}\int_0^T \bigl(U(t) + u(e,t)\bigr)\, dt
\;=\; \bigl[b + (1+a-b)\,e\bigr]\,\Phi(e),
\end{equation}
extended by $0$ on $[e^*, 1]$, where the corpus goes extinct. We call planner the hypothetical agent that would set the rate to maximize $W$. We denote by $e_{\mathrm{soc}}$ a maximizer of $W$. Theorem~\ref{thm:welfare}~(a) shows it is unique. Since $W(e) = (1+a-b)\,(\rho + e)\,\Phi(e)$ with
\begin{equation}
\rho \;:=\; \frac{b}{1+a-b} \;\geq\; 0,
\end{equation}
the maximizer depends on the preference parameters $(a,b)$ only through the ratio $\rho$.

The engine's payoff was written with the average quality of open publishers $\bar{Q}(e)$, and the users' surplus with the current quality $Q(t)$. The difference is only in the transition: along any trajectory with $e < e^*$, $Q(t) \to \bar{Q}(e)$, so both give the same long-run average.

The comparison below judges the planner and the engine by the same average payoff criterion. Under a discounted criterion, the planner would also value the transition, during which the volume is still high (Remark~\ref{rem:mutual-ruin}). We leave publisher surplus out on purpose. Adding it only lowers the social optimum (Remark~\ref{rem:publisher-surplus}), so the omission biases the planner toward extraction.

\begin{theorem}[the social optimum stays below the threshold]\label{thm:welfare}
Under Assumptions~\ref{ass:participation} to~\ref{ass:regularity}, the conditions of Proposition~\ref{prop:threshold}, and Assumption~\ref{ass:user-surplus}:
\begin{itemize}
\item[(a)] $W$ is continuous on $[0,1]$ and strictly concave on $[0, e^*]$; it attains its maximum at a unique $e_{\mathrm{soc}} \in [0, e^*)$: the social optimum lies strictly below the erosion threshold. Moreover $e_{\mathrm{soc}} > 0$ if and only if $\Phi(0) + \rho\,\Phi'_+(0) > 0$.
\item[(b)] $e_{\mathrm{soc}} \leq e_{\mathrm{eng}}$. If $b = 0$, then $e_{\mathrm{soc}} = e_{\mathrm{eng}}$. If $b > 0$ and $e_{\mathrm{eng}} \neq e_{\mathrm{out}}$, then $e_{\mathrm{soc}} < e_{\mathrm{eng}}$: as soon as organic visits retain any value for users and the sustainable optimum is not at the kink, the social optimum lies strictly below it.
\item[(c)] $e_{\mathrm{soc}}$ is nonincreasing in $\rho$, hence nonincreasing in $b$ and nondecreasing in $a$.
\item[(d)] Let $\bar{U}(e) := \lim_{T\to\infty}\frac{1}{T}\int_0^T U(t)\, dt = \bigl[b + (a-b)\,e\bigr]\,\Phi(e)$ denote the users' long-run surplus, and $e_{\mathrm{users}}$ its unique maximizer. Then
$e_{\mathrm{users}} \leq e_{\mathrm{soc}} \leq e_{\mathrm{eng}}$, and the three rates coincide when $b = 0$.
\end{itemize}
\end{theorem}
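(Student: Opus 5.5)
The plan is to reduce every part to the sign of a one-parameter family of "marginal" functions built from $\Phi$, in the same spirit as the function $G_m$ in the proof of Theorem~\ref{thm:competition}. For $c \geq 0$, set
\[
H_c^{\pm}(e) := \Phi(e) + (c+e)\,\Phi'_{\pm}(e).
\]
These are the one-sided derivatives of $(c+e)\Phi(e)$.

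\emph{Part (a).} Continuity of $W$ follows from continuity of $\Phi$ (with $\Phi(e^*)=0$), exactly as in Theorem~\ref{thm:myopia}~(b). For strict concavity on $[0,e^*]$, write $W = (1+a-b)(\rho+e)\Phi(e)$. This is the product of a positive, increasing, affine factor and a nonnegative, concave, strictly decreasing factor. At points of differentiability the second derivative is $2\Phi' + (\rho+e)\Phi''$, and $2\Phi' < 0$ because $\Phi$ is strictly decreasing and concave. At the kink, the slope of $\Phi$ drops, so the slope of $W$ drops too. A cleaner route, which avoids assuming $\Phi''$ exists, is to show that the right derivative $H_\rho^+(e)$ is strictly decreasing. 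It is the sum of $(\rho+e)\Phi'_+(e)$, nonincreasing as a product of a positive increasing term with a negative nonincreasing term, and $\Phi(e)$, strictly decreasing. Hence $W$ is strictly concave.

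Given strict concavity, the maximizer is unique. It lies in $[0,e^*)$ because $W(e^*)=0$ while $W(0) = b\,\Phi(0) \geq 0$ and $W>0$ inside. Concavity also gives the criterion: $e_{\mathrm{soc}}>0$ if and only if the right derivative at $0$ is positive, that is, $H_\rho^+(0) = \Phi(0) + \rho\,\Phi'_+(0) > 0$.

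\emph{Parts (b) and (c).} Both follow from a monotone comparison in $c$. The engine maximizes $e\Phi(e)$, which is the case $c=0$; the planner is the case $c=\rho$; and $\bar U$, after dividing by $a-b$, is the case $c = b/(a-b) \geq \rho$. For $c<c'$ we have $H_{c'}^{\pm} - H_c^{\pm} = (c'-c)\Phi'_{\pm} < 0$ on $[0,e^*)$, so the subgradient condition's crossing point moves left as $c$ grows. Concretely, at $x=\mathrm{argmax}_c$ the condition gives $H_c^+(x)\le 0$, hence $H_{c'}^+(x)<0$, and by concavity $\mathrm{argmax}_{c'} \le x$. This yields (c) directly, and (b)'s weak inequality and the $b=0$ equality, since then $\rho=0$ and $W$ is proportional to $J$. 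For strictness in (b) with $b>0$ and $e_{\mathrm{eng}}\neq e_{\mathrm{out}}$: $\Phi$ is differentiable at $e_{\mathrm{eng}}$ and $H_0(e_{\mathrm{eng}})=0$, so $H_\rho(e_{\mathrm{eng}}) = \rho\Phi'(e_{\mathrm{eng}})<0$, which forces $e_{\mathrm{soc}}<e_{\mathrm{eng}}$. This uses $\Phi'<0$ there, which follows from strict decrease together with concavity. One needs care when $e_{\mathrm{eng}}=0$, but that is excluded by Theorem~\ref{thm:myopia}~(b).

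\emph{Part (d).} The same comparison gives $e_{\mathrm{users}}\le e_{\mathrm{soc}}$ since $b/(a-b)\ge b/(1+a-b)$. Uniqueness of $e_{\mathrm{users}}$ comes from the concavity argument of (a) with $c = b/(a-b)$. When $b=0$ all three $c$ coincide at $0$, so all three rates are equal.

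\emph{Main obstacle.} The delicate step is handling the kink and the boundary $e=0$ rigorously in the comparison argument. The maximizer may sit at $e_{\mathrm{out}}$ or at $0$, where only one-sided derivatives exist, and the subgradient condition then involves $\Phi'_-$ and $\Phi'_+$ separately. I would phrase everything via right derivatives, using the fact that for a concave function $x$ is a maximizer on $[0,e^*]$ if and only if the right derivative at $x$ is $\le 0$ and the left derivative at $x$ is $\ge 0$ (the latter condition dropped at $x=0$), and check the inequalities one side at a time.
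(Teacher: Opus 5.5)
Your proof is correct, but it departs from the paper's in two places.

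\emph{Strict concavity in (a).} The paper avoids derivatives entirely. With $A(e)=b+(1+a-b)e$ it uses the chord identity $A(z)\,[t\Phi(x)+(1-t)\Phi(y)] = tW(x)+(1-t)W(y)+t(1-t)\,[A(y)-A(x)]\,[\Phi(x)-\Phi(y)]$, so strictness comes visibly from the positive cross term between an increasing and a decreasing factor. Your route through the strict decrease of $H^+_\rho$ is sound: the product $(\rho+e)\Phi'_+(e)$ is nonincreasing because $\rho+e\ge 0$ increases and $\Phi'_+\le 0$ is nonincreasing by concavity, kink included. Its last step, ``strictly decreasing right derivative implies strict concavity,'' is a lemma you invoke rather than prove. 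It does hold here: $\Phi$, being concave, is locally Lipschitz on $[0,e^*)$, so $W(y)-W(x)=\int_x^y W'_+$, and continuity at $e^*$ does the rest.

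\emph{Part (c).} The paper uses a derivative-free revealed-preference argument. Summing the two optimality inequalities gives $(\rho-\rho')\,[\Phi(e)-\Phi(e')]\ge 0$, which needs only strict decrease of $\Phi$ and is insensitive to kinks and corner maximizers. You instead obtain (b), (c) and (d) from one single-crossing comparison $H^+_{c'}=H^+_c+(c'-c)\Phi'_+$ at the maximizer, plus concavity. That is precisely the paper's argument for (b): its bound $W'_+(e_{\mathrm{eng}})\le -b\,\Phi(e_{\mathrm{eng}})/e_{\mathrm{eng}}$ is your $H^+_\rho(e_{\mathrm{eng}})\le \rho\,\Phi'_+(e_{\mathrm{eng}})$, scaled by $1+a-b$ and combined with $\Phi'_+(e_{\mathrm{eng}})\le -\Phi(e_{\mathrm{eng}})/e_{\mathrm{eng}}$. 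Your write-up is therefore more unified and parallels $G_m$ in Theorem~\ref{thm:competition}, while the paper's proof of (c) needs fewer hypotheses.

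Two small points:
\begin{itemize}
\item The weak comparison only needs $H^+_{c'}(x)\le 0$. You need not claim $\Phi'_+<0$ at $e=0$, which is true in this model but not a consequence of concavity and strict decrease alone.
\item The ``hence'' in (c) still requires checking that $\rho=b/(1+a-b)$ is increasing in $b$ and nonincreasing in $a$.
\end{itemize}
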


\begin{proof}
Throughout, write $A(e) := b + (1+a-b)\,e$. This affine function is strictly increasing, nonnegative on $[0,1]$, and $W = A\,\Phi$ on $[0, e^*]$.

(a) We first prove that $W$ is continuous on $[0,1]$. On $[0, e^*)$, $W = A\,\Phi$ is a product of continuous functions. At $e^*$, the left limit equals $A(e^*)\,\Phi(e^*) = 0$, which is the value of the extension. Beyond $e^*$, $W$ is constant.

We then prove that $W$ is strictly concave on $[0, e^*]$. Let $0 \leq x < y \leq e^*$, $t \in (0,1)$, and $z = t x + (1-t) y$. Concavity of $\Phi$ (Assumption~\ref{ass:regularity}), $A(z) > 0$ for $z > 0$, and the identity $A(z) = t A(x) + (1-t) A(y)$ give
\begin{align*}
W(z) \;=\; A(z)\,\Phi(z) \;&\geq\; A(z)\,\bigl[t\,\Phi(x) + (1-t)\,\Phi(y)\bigr]\\
&=\; t\,A(x)\Phi(x) + (1-t)\,A(y)\Phi(y) + t(1-t)\,\bigl[A(y)-A(x)\bigr]\,\bigl[\Phi(x)-\Phi(y)\bigr]\\
&>\; t\,W(x) + (1-t)\,W(y).
\end{align*}
The last inequality is strict because $A$ increases strictly while $\Phi$ decreases strictly on $[0, e^*]$.

A continuous and strictly concave function on the compact $[0, e^*]$ attains its maximum at a unique point, which is $e_{\mathrm{soc}}$. This point is not $e^*$, because $W(e^*) = 0$ while $W > 0$ on $(0, e^*)$. Finally, by strict concavity, the maximizer is $0$ exactly when $W'_+(0) \leq 0$, and $W'_+(0) = (1+a-b)\,\bigl[\Phi(0) + \rho\,\Phi'_+(0)\bigr]$.

(b) The argument of (a), applied to $A(e) = e$, shows that $J = e\,\Phi(e)$ is strictly concave on $[0, e^*]$. The maximizer $e_{\mathrm{eng}}$ is therefore unique, consistently with Theorem~\ref{thm:competition}, where $\hat{e}(1) = e_{\mathrm{eng}}$. By Theorem~\ref{thm:myopia}~(b), $e_{\mathrm{eng}} \in (0, e^*)$. Optimality at $e_{\mathrm{eng}}$ gives $J'_+(e_{\mathrm{eng}}) \leq 0$, that is $\Phi'_+(e_{\mathrm{eng}}) \leq -\,\Phi(e_{\mathrm{eng}})/e_{\mathrm{eng}}$. Hence
\begin{equation*}
W'_+(e_{\mathrm{eng}}) \;=\; (1+a-b)\,\Phi(e_{\mathrm{eng}}) + A(e_{\mathrm{eng}})\,\Phi'_+(e_{\mathrm{eng}})
\;\leq\; -\,\frac{b\,\Phi(e_{\mathrm{eng}})}{e_{\mathrm{eng}}} \;\leq\; 0.
\end{equation*}
If we had $e_{\mathrm{soc}} > e_{\mathrm{eng}}$, concavity of $W$ would give $W'_+(e_{\mathrm{eng}}) \geq [W(e_{\mathrm{soc}}) - W(e_{\mathrm{eng}})]/(e_{\mathrm{soc}} - e_{\mathrm{eng}}) > 0$, which contradicts $W'_+(e_{\mathrm{eng}}) \leq 0$. Therefore $e_{\mathrm{soc}} \leq e_{\mathrm{eng}}$. If $b = 0$, then $W = (1+a)\,J$ and the two maximizers coincide. If $b > 0$ and $e_{\mathrm{eng}} \neq e_{\mathrm{out}}$, then $\Phi$ is differentiable at $e_{\mathrm{eng}}$ (Assumption~\ref{ass:regularity}), so optimality gives the equality $\Phi'(e_{\mathrm{eng}}) = -\,\Phi(e_{\mathrm{eng}})/e_{\mathrm{eng}}$, and the computation above becomes $W'(e_{\mathrm{eng}}) = -\,b\,\Phi(e_{\mathrm{eng}})/e_{\mathrm{eng}} < 0$. The maximum of $W$ is therefore not at $e_{\mathrm{eng}}$, and since $e_{\mathrm{soc}} \leq e_{\mathrm{eng}}$, we conclude $e_{\mathrm{soc}} < e_{\mathrm{eng}}$.

(c) Let $\rho' > \rho \geq 0$, and let $e = e_{\mathrm{soc}}(\rho)$ and $e' = e_{\mathrm{soc}}(\rho')$ be the maximizers of $(\rho + \cdot)\,\Phi$ and $(\rho' + \cdot)\,\Phi$. Summing the two optimality inequalities $(\rho + e)\Phi(e) \geq (\rho + e')\Phi(e')$ and $(\rho' + e')\Phi(e') \geq (\rho' + e)\Phi(e)$ and simplifying gives $(\rho - \rho')\,[\Phi(e) - \Phi(e')] \geq 0$. Since $\rho < \rho'$, we get $\Phi(e) \leq \Phi(e')$, and $e \geq e'$ because $\Phi$ decreases strictly. Finally, $\partial \rho/\partial b = (1+a)/(1+a-b)^2 > 0$ and $\partial \rho/\partial a = -\,b/(1+a-b)^2 \leq 0$.

(d) The limit follows from Proposition~\ref{prop:threshold}, as for $W$. Since $a > b$, we can write $\bar{U}(e) = (a-b)\,(\rho_u + e)\,\Phi(e)$ with $\rho_u := b/(a-b) \geq 0$. The argument of (a), applied to the affine factor $b + (a-b)e$, gives strict concavity and a unique maximizer $e_{\mathrm{users}} = e_{\mathrm{soc}}(\rho_u)$. Since $1 + a - b > a - b$, we have $\rho_u \geq \rho$, and (c) gives $e_{\mathrm{users}} \leq e_{\mathrm{soc}}$. If $b = 0$, then $\rho_u = \rho = 0$, both $W$ and $\bar{U}$ are proportional to $J$, and the three maximizers coincide.
\end{proof}

Part (c) says which way the planner leans. The more users value the visit, the less extraction the planner tolerates. The more they value the direct answer, the more it tolerates.

In the $m$-engine game of Section~\ref{subsec:competition}, the commons responds to the average rate $\bar{E}$, and the users' surplus depends on it as well, so social welfare at a symmetric profile of rate $\hat{e}(m)$ is $W(\hat{e}(m))$.

\begin{corollary}[competition dissipates the users' surplus]\label{cor:dissipation}
Under the assumptions of Theorem~\ref{thm:welfare}:
\begin{itemize}
\item[(a)] for every $m \geq 1$, $e_{\mathrm{soc}} \leq e_{\mathrm{eng}} = \hat{e}(1) \leq \hat{e}(m) < e^*$: every number $m$ of engines, the single engine included, weakly overextracts relative to the social optimum;
\item[(b)] $W(\hat{e}(m)) \to 0$ as $m \to \infty$: competition dissipates not only the rent of the commons (Theorem~\ref{thm:competition}) but social welfare, the users' long-run surplus included;
\item[(c)] under the condition $e^* < e_{\mathrm{out}}$ of Theorem~\ref{thm:myopia} (c), the myopic rate satisfies $e_{\mathrm{myo}} > e^*$ and social welfare at $e_{\mathrm{myo}}$ is zero.
\end{itemize}
\end{corollary}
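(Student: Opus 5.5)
The corollary assembles results already proved, so I would treat each part as a short chain of earlier statements.

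For (a), I would chain three inequalities. Theorem~\ref{thm:welfare}~(b) gives $e_{\mathrm{soc}} \leq e_{\mathrm{eng}}$. Theorem~\ref{thm:competition}~(b) gives $e_{\mathrm{eng}} = \hat{e}(1)$ and the monotonicity $\hat{e}(1) \leq \hat{e}(m)$ for every $m \geq 1$. Theorem~\ref{thm:competition}~(a) gives $\hat{e}(m) \in (0,e^*)$. The reading ``weakly overextracts'' is then just $e_{\mathrm{soc}} \leq \hat{e}(m)$. Strictness holds when $b>0$ and $e_{\mathrm{eng}} \neq e_{\mathrm{out}}$, but the statement only claims the weak version.

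For (b), I would use the identity $W(e) = [b + (1+a-b)e]\,\Phi(e)$ on $[0,e^*)$. The prefactor is bounded on $[0,1]$ by $\max(b, 1+a)$. Theorem~\ref{thm:competition}~(c) gives $\hat{e}(m) \to e^*$. Continuity of $\Phi$ with $\Phi(e^*)=0$ (equivalently, continuity of $W$ from Theorem~\ref{thm:welfare}~(a) together with $W(e^*)=0$) then yields $W(\hat{e}(m)) \to 0$. To justify ``the users' long-run surplus included'', I would add that $0 \leq \bar{U}(e) \leq W(e)$ on $[0,e^*)$. This holds because $u$'s long-run contribution $e\,\Phi(e)$ is nonnegative, so $\bar{U}(\hat{e}(m)) \to 0$ as well.

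For (c), Theorem~\ref{thm:myopia}~(c) gives $e_{\mathrm{myo}} > e^*$ under $e^* < e_{\mathrm{out}}$. By definition, $W$ is extended by $0$ on $[e^*,1]$. This extension is legitimate because Proposition~\ref{prop:threshold} gives $N(t) \to 0$ there while $Q(t) \leq 1$, so both $U(t)$ and $u(e,t)$ tend to zero. Their Cesàro averages therefore vanish and $W(e_{\mathrm{myo}}) = 0$.

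The only point needing care is this last justification. A function tending to zero has a zero time average, and the bounded factors $a e + b(1-e)$ and $Q$ do not affect that. I would state this explicitly rather than rely solely on the convention, so the result does not rest on a definition.
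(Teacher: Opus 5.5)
Your proposal is correct and follows the paper's proof: (a) chains Theorem~\ref{thm:welfare}~(b) with Theorem~\ref{thm:competition}~(a)--(b), (b) combines $\hat{e}(m)\to e^*$ with the continuity of $W$ and $W(e^*)=0$, and (c) combines Theorem~\ref{thm:myopia}~(c) with $W=0$ on $[e^*,1]$. Your additions are sound and go slightly beyond what the paper's proof writes out: the bound $0\le\bar{U}\le W$ makes the claim about the users' long-run surplus explicit, and the time-average argument justifies the zero extension of $W$ on $[e^*,1]$.
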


\begin{proof}
(a) combines Theorem~\ref{thm:welfare} (b) with Theorem~\ref{thm:competition} (b). 

(b): $\hat{e}(m) \to e^*$ by Theorem~\ref{thm:competition} (c), $W$ is continuous and $W(e^*) = 0$. (c) is Theorem~\ref{thm:myopia} (c) together with $W = 0$ on $[e^*, 1]$.
\end{proof}

Theorem~\ref{thm:welfare} (d) and Corollary~\ref{cor:dissipation} order every rate of the paper on one axis:
\begin{equation*}
e_{\mathrm{users}} \;\leq\; e_{\mathrm{soc}} \;\leq\; e_{\mathrm{eng}} = \hat{e}(1) \;\leq\; \hat{e}(m) \;<\; e^*,
\end{equation*}

with the myopic rate beyond the threshold of Theorem~\ref{thm:myopia} (c). 

The place of the users in this chain deserves an explanation: they prefer more extraction at every query, yet the rate that serves their long-run interest is the lowest of the three. Raising the rate benefits every agent through the direct answers and costs every agent through the erosion of the commons. Users also lose the visits, which still carry a value $b$ for them, so their optimum is the lowest. The engine loses nothing through visits, so its optimum is the highest. The planner counts the users' loss, diluted with the engine's payoff, and its optimum lies between the two.

The first three rates of the ordering are optima: each says where an agent's long-run interest lies. Only the equilibrium rates $\hat{e}(m)$ describe a behavior. This distinction carries a practical warning about measurement. By construction, the surplus $U(t)$ increases with $e$ for a given corpus. An engine or a regulator that adjusts the extraction rate to this measured surplus therefore pushes the system past $e^*$, because the signal remains positive at every corpus and never shows the threshold where the users' long-run surplus $\bar{U}$ reaches zero.

Part (b) has a simple cause: the engine ignores the user value $b\,(1-e)\,\Phi(e)$ that its extraction destroys, so even a long-run oriented single engine overextracts. The unit weight on the engine's payoff in $W$ restricts nothing. Take any welfare function that weights answers, visits and engine payoff with nonnegative coefficients, and suppose that more extraction still raises it for a given corpus, as in Assumption~\ref{ass:user-surplus}. Up to a positive factor, this function is again of the form $(\rho + e)\,\Phi(e)$ for some $\rho \geq 0$, and Theorem~\ref{thm:welfare} applies to it.

\begin{remark}\label{rem:corner}
By Theorem~\ref{thm:welfare} (a), the planner shuts extraction down entirely when $\rho \geq \Phi(0)/\lvert\Phi'_+(0)\rvert$ (a finite bound since $\Phi'_+(0) < 0$). Below the bound, the social optimum is interior.
\end{remark}

\begin{remark}[publisher surplus]\label{rem:publisher-surplus}
Adding publishers reinforces the theorem. At rate $e$, a publisher of quality $\theta$ earns $\max\bigl(r(1-e)\theta,\; \omega\theta - \kappa\bigr)$, as in Section~\ref{sec:dilution}. This revenue never increases with $e$. Neither does the aggregate publisher surplus. Adding that surplus to $W$ moves every maximizer weakly below $e_{\mathrm{soc}}$. Every market structure still overextracts, and by a wider margin. A large enough publisher surplus pushes the optimum below $e_{\mathrm{users}}$.
\end{remark}

\section{Survival mechanisms of the ecosystem}\label{sec:mechanisms}

A mechanism is a solution to our problem if it moves the system away from the extinction regime, either by lowering the extraction rate $e$, or by shifting the threshold $e^*$ upward through $n$ or $\lambda$, or by delaying opt-out and preserving corpus quality through $e_{\mathrm{out}}$. 

We now present seven mechanisms, grouped according to their coordination prerequisites: what can be done by the engine alone, what requires a market, and what requires collective coordination. 

\subsection{Architecture mechanisms: implementable by the engine alone}

\paragraph{M1. The referral floor.} The engine imposes on itself a floor of outbound traffic to the sources: prominent citations, actionable links in the answer, a published and quantified commitment on the ratio between pages crawled and visits returned. In the model, M1 directly lowers $e$. It is the simplest and most robust mechanism: it requires no agreement from anyone and is directly measurable. Since all three channels depend on $e$, it acts on them together. Restitution is a design variable: the gap between pages crawled and sources credited already varies strongly from one GSE to another \cite{strauss2026}.

\paragraph{M2. The teaser answer.} The engine voluntarily bounds what it extracts from each source: the answer addresses the question, but the full data remain behind the click. M2 lowers $e$ , hence reducing depreciation through $\lambda(e)$: if citation becomes profitable again, the shift toward perishable content stops.

\subsection{Market mechanisms: implementable by contract}

\paragraph{M3. Paid crawling.} Access to content is paid per crawl request, at a price set by the publisher. Its primary function is compensation: the payment improves the option of staying accessible, which raises $\bar{\theta}$ and keeps the best publishers open longer. Its filtering function, by contrast, is conditional and we present it as such: a price is not in itself a quality signal. The price sorts quality only if the GSE holds a credible estimate of the marginal value of the content and refuses to pay above it. 

\paragraph{M4. Flow-indexed licensing.} The engine pays for new production, not for the archive: the license price is proportional to the publisher's flow of new content. With M4, the engine pays for exactly what it needs to survive: freshness. The license ties its extraction cost back to the health of the corpus. In the model, M4 adds license revenue to the funded flow, raising the renewal coefficient $n$ (equivalently, a higher effective $\pi$ in Assumption~\ref{ass:renewal}), and targets the publishers whose contribution to $n$ is largest. 

\paragraph{M5. Attribution-based compensation.} Every generated answer distributes a micro-payment across its sources, in proportion to their effective contribution to the answer. The reference method is a Shapley-value attribution \cite{shapley1953}: each source is paid its average marginal contribution to the answer. The method has been applied to retrieved documents \cite{ye2025, nematov2025}, with approximations adapted to generative search \cite{maxshapley2025}. 

\subsection{Coordination mechanisms: implementable by publishers or the regulator}

\paragraph{M6. Collective bargaining.} A lone publisher's marginal contribution to the corpus is small, so its threat to opt out is weak. The European press publishers' right \cite{cdsm2019} subjects the online use of press publications by online services to the publishers' authorization. A collective management structure on this model aggregates the long tail into a single bargaining party, whose collective opt-out would degrade the engine's quality. It provides a compensation path for small publishers without solving the attribution problem of M5, and a counterweight that can make M3, M4 and M5 negotiable on real terms.

\paragraph{M7. The enforceable usage signal.} M7 extends robots.txt into a machine-readable declaration of content usage terms: crawling allowed in exchange for citation, for payment, or forbidden, with legal force. The technical standard exists in draft form in several initiatives. What is missing is enforceability. This mechanism goes beyond the model. In the binary open-or-closed choice, reducing the opt-out cost $\kappa$ hastens opting out and degrades the crawlable corpus. The real function of M7 is to create a third state absent from the model, conditional access, with a contracting cost distinct from $\kappa$ that the mechanism would drive toward zero. The publisher then states its price instead of choosing between free openness and opting out. Formalizing this third state would be an extension of the model.

\subsection{Combining the mechanisms}

No mechanism is sufficient alone, but they differ in their prerequisites and in how soon they can start. M1 and M2 require nobody's agreement and can start immediately. M3, M4 and M5 require a market, whose first building blocks exist. M6 and M7 require collective or regulatory coordination, the slowest to build. 

The welfare ordering gives three further implications for how these mechanisms fit together. First, bringing a competitive extraction rate $\hat{e}(m)$ down toward $e_{\mathrm{soc}}$ through M1 or M2 can benefit users themselves, even though they prefer more extraction for a given corpus. Second, whenever 
$\hat{e}(m) > e_{\mathrm{eng}}$, reducing extraction also benefits the engines collectively. In that range, M1 and M2 raise the aggregate payoff and preserve the corpus at the same time. Third, the compensation mechanisms M3, M4 and M5 cannot by themselves solve the competitive externality. By supporting participation or renewal, they can preserve the corpus and shift the relevant thresholds, but each engine still has an incentive to extract against the value preserved by the others. Compensation and lowering the rate are therefore complements when several engines share the commons.

M6 and M7 play a different role. The implications above compare extraction rates, and neither M6 nor M7 sets a rate. Both act on the conditions under which the other mechanisms operate: M6 gives publishers the bargaining power that makes M3, M4 and M5 negotiable on real terms, and M7 adds an option, conditional access, that the binary open-or-closed choice does not contain.

Table~\ref{tab:mechanisms} summarizes the correspondence between each mechanism and the model parameter it repairs.

\begin{table}[h]
\centering
\small
\begin{tabular}{llll}
\toprule
Mechanism & Implemented by & Acts on & Prerequisite \\
\midrule
M1 Referral floor & Engine & $e \downarrow$ & None \\
M2 Teaser answer & Engine & $e \downarrow$ & None \\
M3 Paid crawling & Engine + publisher & $\bar{\theta} \uparrow$, $\bar{Q} \uparrow$ & Market \\
M4 Flow-indexed licensing & Engine + publisher & $\pi \uparrow$, $n \uparrow$ & Market \\
M5 Attribution  & Engine + publishers & revenue $\uparrow$  & Market \\
M6 Collective bargaining & Publishers & market power & Coordination \\
M7 Enforceable signal & Regulator & extension: conditional access & Coordination \\
\bottomrule
\end{tabular}
\caption{The seven mechanisms and the model parameter each acts on.}\label{tab:mechanisms}
\end{table}

\section{Conclusion}\label{sec:conclusion}

The crawlable corpus is a renewable common-pool resource whose stock depends on the extraction imposed on it, not a fixed input to generative search. We established that GSEs degrade its three quantities (volume, quality and lifetime) at once. Below the erosion threshold the degradation is continuous. Above it the corpus goes extinct.

A myopic GSE can cross the threshold, a long-run oriented GSE stays below it. Competition worsens the phenomenon: with several engines, the symmetric equilibrium extraction rate converges to the threshold. Moreover, under the assumption most favorable to extraction, that users strictly prefer direct answers, the socially optimal extraction rate lies strictly below the threshold.

Avoiding this collapse requires mechanisms that either lower the extraction rate or fund the renewal of the corpus directly. A generative search ecosystem is viable only if its extraction rate leaves enough revenue to pay for that renewal.

\newpage
\appendix
\section{Numerical instantiations}\label{sec:appendix}

This appendix instantiates the model on an explicit parameter set, with two goals: to prove that Assumption~\ref{ass:regularity} is satisfiable, as announced in Section~\ref{subsec:competition}, and to prove that the kink plateau of Theorem~\ref{thm:competition}~(b) does occur. Take $f$ uniform on $[0,1]$, the simplest density, and the parameter set $r = 1$, $\omega = 0.6$, $\kappa = 0.3$, $\pi = 1.2$, $\lambda(0) = 0.05$, $\delta = 0.1$, $K = 1$. For the uniform density, $\mathbb{E}[\theta] = 1/2$, $\mu(e) = \bar{\theta}(e)^2/2$ and $\bar{Q}(e) = \bar{\theta}(e)/2$. Then $e_{\mathrm{out}} = 0.70$ and $e^* \approx 0.738$. The parameters are chosen for this: the kink is interior to $[0, e^*]$, so opting out begins before extinction and all three channels are active below the threshold. Figure~\ref{fig:appendix} plots the resulting steady state and equilibria.

\paragraph{Satisfiability of Assumption~\ref{ass:regularity}.} Concavity holds analytically, and three facts establish it: $\Phi'' < 0$ below the kink, $\Phi'' < 0$ above it, and a slope that drops at the kink. Below the kink, $\Phi(e) = \frac{1}{2} - \frac{1+2e}{24(1-e)}$, so $\Phi''(e) = -\frac{1}{4(1-e)^3} < 0$. Above the kink, $\Phi(e) = \frac{50e^3 - 15e^2 + 15e - 23}{36(5e^2 - 7e + 2)}$ and $\Phi''(e) = \frac{200e^3 - 375e^2 + 285e - 83}{2(e-1)^3(5e-2)^3}$. On $[e_{\mathrm{out}}, e^*]$, the denominator is negative. The numerator is positive there. Its derivative $600e^2 - 750e + 285$ has negative discriminant and positive leading coefficient, hence is positive everywhere, so the numerator increases, and it already equals $27/20$ at $e_{\mathrm{out}}$. Hence $\Phi'' < 0$ above the kink as well. At the kink itself, the slope drops from $\Phi'_-(e_{\mathrm{out}}) = -25/18$ to $\Phi'_+(e_{\mathrm{out}}) = -25/6$. The function $\Phi$ is therefore concave on all of $[0, e^*]$, and Assumption~\ref{ass:regularity} holds.

\paragraph{The kink plateau.} The right panel of Figure~\ref{fig:appendix} shows the trajectory predicted by Theorem~\ref{thm:competition}~(b): as the number of engines grows, the equilibrium rises from the single-engine optimum, pauses at the kink, and resumes toward the threshold. In numbers, $e_{\mathrm{eng}} \approx 0.537$, $\hat{e}(2) \approx 0.615$, $\hat{e}(5) \approx 0.690$, then $\hat{e}(m) = e_{\mathrm{out}} = 0.70$ for $m = 6$ to $17$, then $\hat{e}(20) \approx 0.704$ and $\hat{e}(50) \approx 0.724$, against $e^* \approx 0.738$. The kink plateau therefore occurs in this instantiation, and it is finite: twelve values of $m$, after which strict growth resumes.
\begin{figure}[h]
\centering
\includegraphics[width=0.98\textwidth]{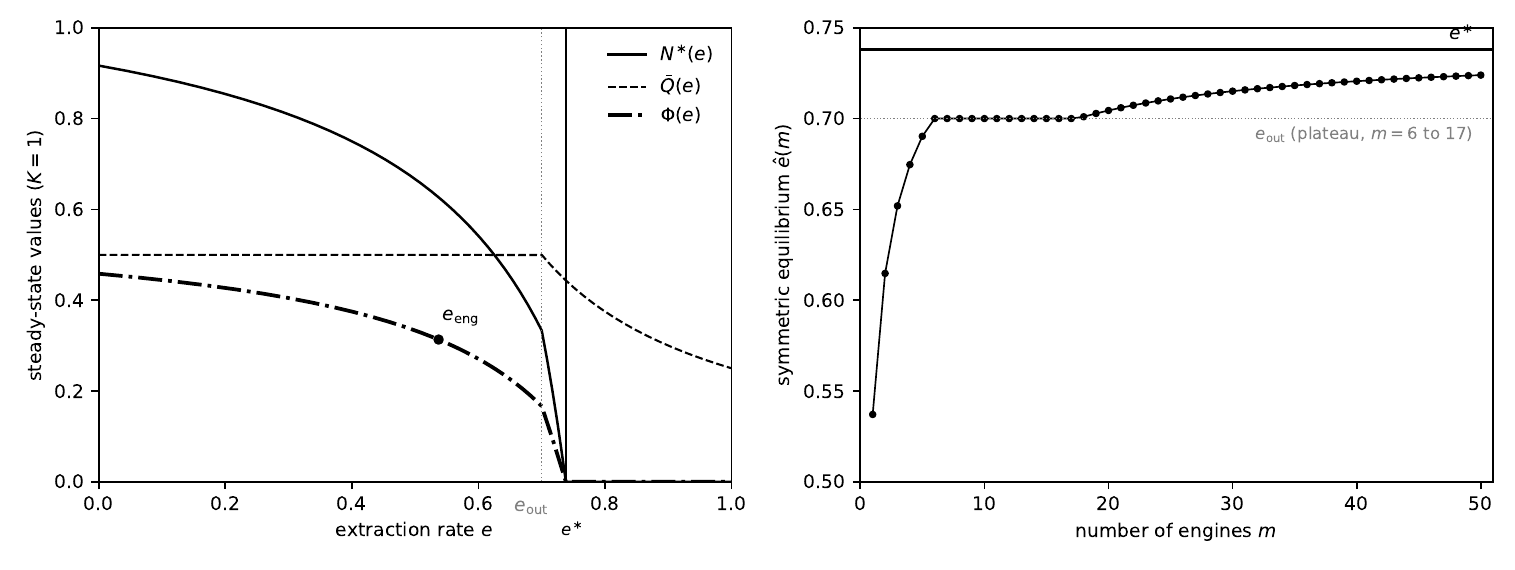}
\caption{Left: steady-state volume $N^*(e)$, average quality of open publishers $\bar{Q}(e)$, steady-state value of the commons $\Phi(e)$, with the sustainable optimum $e_{\mathrm{eng}}$, the kink $e_{\mathrm{out}}$ and the erosion threshold $e^*$. Right: the sustainable symmetric equilibrium $\hat{e}(m)$ as a function of the number of engines, with the kink plateau ($m = 6$ to $17$) and convergence to $e^*$.}\label{fig:appendix}
\end{figure}

\end{document}